\newtheorem{lemma}{Lemma}
\newtheorem{theorem}[lemma]{Theorem}
\newtheorem{prop}{Proposition}
\def\BibTeX{{\rm B\kern-.05em{\sc i\kern-.025em b}\kern-.08em
    T\kern-.1667em\lower.7ex\hbox{E}\kern-.125emX}}
\begin{document}
\title{OddEEC: A New Sketch Technique for Error Estimating Coding\\
\thanks{}
}

\author{\IEEEauthorblockN{Anonymous Authors}}

\author{\IEEEauthorblockN{Huayi Wang \qquad\qquad Jingfan Meng \qquad\qquad Jun Xu}
  \IEEEauthorblockA{
    \textit{Georgia Institute of Technology}, Atlanta, GA, USA \\
    huayiwang@gatech.edu \qquad  jfmeng@gatech.edu \qquad jx@cc.gatech.edu}

}
\maketitle
\begin{abstract}
Error estimating coding (EEC) is a standard technique for estimating the number of
bit errors during packet transmission over wireless networks. In this paper, we
propose OddEEC, a novel EEC scheme. OddEEC is a nontrivial
adaptation of a data sketching technique named Odd Sketch to EEC, addressing
new challenges therein by its bit sampling technique and maximum likelihood
estimator.  Our experiments show that OddEEC overall achieves comparable estimation accuracy as
competing schemes such as gEEC and mEEC, with much smaller decoding complexity.


\end{abstract}

\section{Introduction}
\label{sec:intro}

\subsection{Problem Statement}~\label{subsec:prob_stat}

The problem of (approximately) estimating the \emph{bit error rate} (BER) on
wireless transmission links using little (bandwidth) overhead has been studied
for more than a decade~\cite{sigcomm_ChenZZY10, hua_simpler_2012,
  huang_eec_2014,meec}. Specifically, suppose that a sender transmits a packet
$P$ of $l$ bits long to a receiver over an unreliable wireless link. During the
transmission, the packet length remains the same, but some bits in it may be
flipped. As a result, the received packet $P'$ may differ from $P$ by some (say
$m$) bits; or equivalently, we say the \emph{Hamming distance} between $P$ and
$P'$ is $m$. The BER $\theta$ of this transmission is defined as $m/l$.

It is significant to know, in real time, this BER, since BER is a key metric for
the current wireless channel condition and makes possible a host of advanced
packet processing capabilities such as packet re-scheduling~\cite{re_scheduling},
routing~\cite{routing_ber}, and carrier-selection schemes~\cite{carrier_select} that can improve the
(good) throughput of a wireless network in various ways. To this end, the
sender can append a small \emph{error estimating coding} (EEC) codeword to $P$,
which, combined with the received packet $P'$, helps the receiver get an
estimation of the BER $\theta$ (between $P$ and $P'$).  

We identify two design requirements for an EEC scheme.  The first requirement is that an EEC scheme should achieve \emph{high estimation accuracy} using as little bandwidth overhead (codeword length) as possible.  This requirement is the design focus of all existing EEC schemes.  
Specifically, since the seminal EEC scheme~\cite{sigcomm_ChenZZY10}, several
works~\cite{hua_simpler_2012,geec,huang_eec_2014,meec} have refined EEC for
higher coding efficiency, or better BER estimation accuracy using a smaller EEC
codeword.  Among them, gEEC~\cite{geec} and mEEC~\cite{meec} not only achieved
the state-of-the-art coding efficiency, but also studied the general principles
for optimizing the coding efficiency.

The second design requirement concerns the \emph{decoding complexity} of an EEC scheme:  This complexity should be 
extremely low so that the receiver can decode a received EEC codeword (to obtain the estimated BER) in a tiny amount of time, say within a 
microsecond or less.  Unlike the first criterion, the second requirement was not a focus of
any existing EEC scheme.  This ``overlook" was ok when these schemes were developed (more than a decade ago):  Back then, 
the wireless transmission rates were not high enough to trigger this requirement.  However, this requirement becomes nonnegotiable now 
with today's wireless transmission rates.  For example, under Wi-Fi 6 (IEEE 802.11ax), whose maximum transmission rate can reach 9.6 Gbps~\cite{wifi_speed}, a 1500-byte-long wireless frame takes about 1.25 microseconds to transmit, and all frame processing (including EEC decoding) 
should finish well within this time frame.  In fact, none of the existing EEC schemes can meet this stringent (decoding speed) requirement, as we will elaborate in Section~\ref{sec:eval_eec}.

\subsection{OddEEC and Its Contributions}\label{subsec:intro_oddeec}
In this work, we propose a new EEC scheme, called OddEEC, that could make a
difference:  OddEEC achieves higher coding efficiency than all
existing schemes while having a significantly lower decoding time complexity,
making it well suited for today's high-speed wireless networks.
OddEEC builds upon an existing {\it data
    sketching} technique called Odd sketch, yet adapting Odd sketch into an EEC
scheme poses several new challenges. In this paper, we propose
techniques that overcome these challenges and theoretically bridge the gap
between data sketching and EEC. 

Before describing the challenges in our adaptation, we first introduce the Odd
sketch~\cite{odd-sketch}. Odd sketch is a space-efficient solution for
estimating the Jaccard similarity between two very large sets $A$ and $B$ (say
each containing millions of elements) that are stored at two different sites
connected by a communication network. We denote this problem by SSE-J (set
similarity estimation in Jaccard).
The goal of SSE-J is to estimate their Jaccard similarity $J(A,B) \triangleq
  |A\cap B|/|A \cup B|$ with a communication cost that is much smaller than that
of transmitting the entire $A$ (or $B$) to the other site. Odd sketch can be
viewed as a (sketching) function $S(\cdot)$ that summarizes $A$ (and $B$) into
a \emph{sketch} $S(A)$ (and $S(B)$) that is much smaller than $A$ (and $B$,
respectively) in size, yet the receiver can accurately estimate $J(A,B)$ from
$S(A)$ and $S(B)$ only (which are much less costly to transmit than $A$ and
$B$).

The first challenge in adapting Odd sketch into an EEC scheme is the difference
in similarity metric: Odd sketch was designed for measuring set similarity (in
Jaccard), whereas EEC, as mentioned above, estimates the Hamming distance
between the transmitted packet $P$ and the received packet $P'$. At a closer look,
however, we observe that the Hamming distance is equivalent to another set similarity metric called \emph{symmetric difference
  cardinality} (SDC), and that the Odd sketches of packets, namely $S(P)$ and
$S(P')$, are very informative about the SDC 
between $P$ and $P'$ (that can be converted to sets as will be explained later).

Our change of set similarity metric, from Jaccard to SDC,
also inadvertently leads to a second challenge: the
dimension reduction technique that Odd sketch employs to handle dissimilar sets
(corresponding to high BERs in this EEC context), namely the Broder's
embedding~\cite{minhash}, works only for Jaccard but not for SDC. To
this end, we develop and analyze a bit sampling technique in place of Broder's
embedding so that our OddEEC can achieve high coding efficiency in the high-BER
region.

\begin{figure*}[htb]
  \begin{center}
	    \includegraphics[width=1.8\columnwidth]{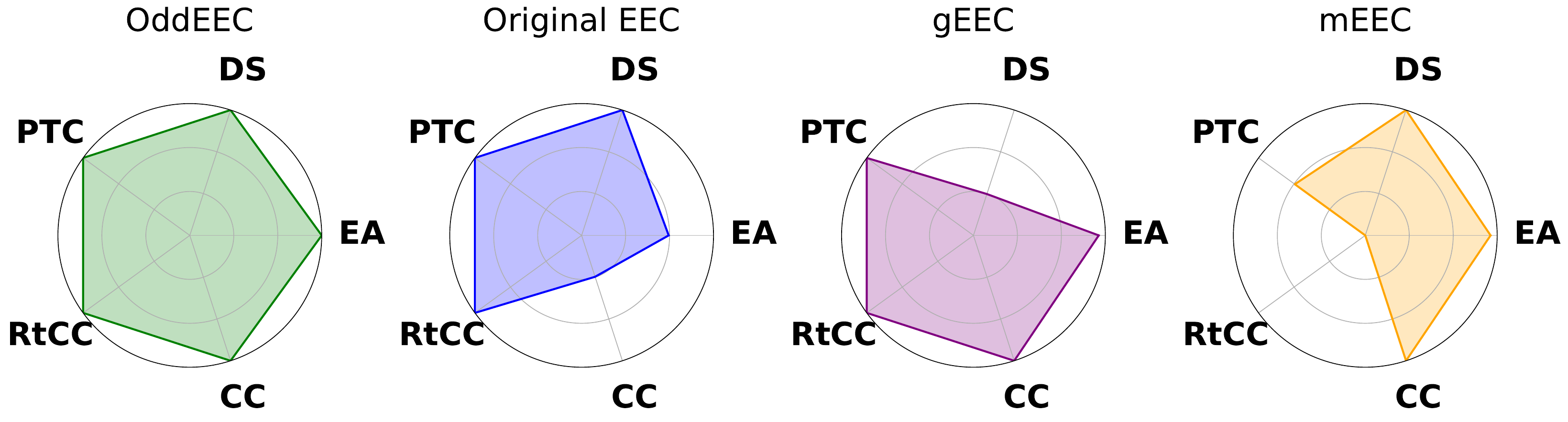}
	    \caption{ Comparison of OddEEC with three other EEC schemes in five aspects: precomputed table compactness (PTC), decoding speed (DS), estimation accuracy (EA), codeword compactness (CC), and robustness to codeword corruption (RtCC).}\label{fig:eval_trade}
	  \end{center}
\end{figure*}

The last, and the most consequential, challenge is caused by the corruption (by bit
errors) of EEC codewords. In a typical data sketching context, a sketch is immune
from bit errors (called the \emph{immunity} assumption in the sequel), but the
EEC codewords, which are transmitted along with packets, are subject to bit
errors under the same BER as the packets. The challenge is that it becomes much
harder to infer BER from a possibly corrupted EEC codeword at the receiver.
To address this challenge, we
develop a maximum likelihood estimator (MLE) that accurately ``accounts for''
possible bit errors in EEC codewords, in BER estimation. 
Building on our MLE framework, we develop a multi-resolution technique that uses multiple OddEEC subcodes parameterized
with different bit sampling rates.  By combining the estimation power and range of the subcodes, this technique allows for 
accurate BER estimation across a much broader range than a single-resolution scheme.

To summarize, we make the following three contributions on top of Odd sketch.
First, we adapt Odd sketch for estimating SDC instead of Jaccard, and develop a
bit sampling technique that is necessary for this change of similarity metric.
Second, we address the challenge of corrupted EEC codewords by developing an MLE,
which finalizes our adaptation of Odd sketch into OddEEC, a novel solution to
the BER estimation problem. Third, our evaluations demonstrate that, compared with the state-of-the-art EEC schemes,  OddEEC has comparable BER estimation accuracy with the same EEC codeword length and significantly lower decoding complexity.

\section{Background and related work}\label{sec:bg_eec}

We model the EEC coding problem as a data sketching problem. First known to the
networking community in early 2000s, data sketching has become a booming
networking research topic over the past
decade (e.g.,~\cite{one_sketch,elastic_sketch,Nitrosketch,bitsense}). A sketch
(function) $S(\cdot)$ summarizes a data vector say $v$ into a sketch (instance)
$S(v)$ that is much smaller than $v$ in size, but preserves certain important
features of $v$. In our EEC coding problem, we seek to design such an
$S(\cdot)$ that the Hamming distance (number of bit errors) between the
transmitted packet $P$ and the received packet $P'$ can be accurately estimated
from their respective sketches $S(P)$ and $S(P')$. With such an $S(\cdot)$
(which OddEEC is), the EEC codeword for, and sent along with, the transmitted
packet $P$ is precisely $S(P)$;  when the receiver receives $P'$, it can
compute $S(P')$, and compare it with $S(P)$ (received along with $P'$) to infer
the Hamming distance. We refer to this sketching problem as the Hamming
distance estimation (HDE) problem. There is a slight difference between the EEC
coding problem and HDE: In HDE, $S(P)$ is {\it immune} from bit errors, whereas in
this EEC coding problem, $S(P)$ is not. While the issue that $S(P)$ can itself
incur bit errors during transmission will certainly be addressed in our OddEEC
solution, we disregard it in our presentation below whenever possible, since this
(idealized) HDE formulation makes our presentation much easier and clearer.

\subsection{Related Work on EEC}\label{subsec:relate_eec}
The seminal paper of Chen et al.~\cite{sigcomm_ChenZZY10} first formulated the
BER estimation problem and proposed the error estimating coding (EEC) approach
to this problem, in which an EEC codeword is sent along with each packet (for
BER estimation). The original EEC scheme proposed therein is designed in a
similar way as an error-correction code~\cite{ecc_book}, although the former is
much smaller in size. In this scheme, an EEC codeword is comprised of
parity-check bits, each of which is the exclusive-or (XOR) of a group of bits
randomly sampled from the packet. Hence each parity bit, along with the group
of bits whose parity it ``checks'', corresponds to a parity equation.
As shown in~\cite{sigcomm_ChenZZY10}, the receiver can estimate this BER by
counting how many of these parity equations are broken after the packet
transmission.  However, the original EEC does not distinguish whether a packet data bit or an EEC bit is flipped in the parity equation. In contrast, the MLE estimator in OddEEC distinguishes these two cases using the likelihood values computed which we will describe in Section~\ref{subsec:mle}.

The second EEC scheme, called EToW~\cite{hua_simpler_2012} (enhanced
Tug-of-War) first considered the EEC coding problem from a data sketching
angle, and formulated it as the aforementioned HDE (Hamming Distance
Estimation) problem. Then, it was shown in~\cite{hua_simpler_2012} that HDE is
equivalent to the well-known $L_2$-norm estimation problem due to the following
fact: When $x$ and $y$ are binary strings (viewed as $l$-dimensional binary
vectors), their Hamming distance is identical to their Euclidean ($L_2$)
distance $||x-y||_2$. Hence, the second EEC coding scheme is an adaptation of a
well-established $L_2$-norm sketching solution called Tug-of-War (ToW). As
shown in~\cite{hua_simpler_2012}, it has achieved better coding efficiency than
the original EEC scheme.

The third EEC scheme combines the original EEC scheme and EToW into a
unified coding framework that they call generalized EEC (gEEC)~\cite{geec} in
the following sense: gEEC can be parameterized into both the original EEC
scheme and EToW. It also achieves much better coding efficiency than both. Specifically, gEEC (with 96-bit EEC codeword) achieves better BER estimation accuracy than the original EEC with 288-bit codeword as shown in~\cite{geec}. The weakness of the original EEC on the codeword compactness (CC) and estimation accuracy (EA) is captured in the following radar chart (the second subfigure from the left in Figure~\ref{fig:eval_trade}). 

The last EEC scheme is mEEC~\cite{meec}. In mEEC, a packet is divided into
equal-length chunks called super-blocks. 
mEEC maps each super-block independently to a color (code)
that is much shorter in length than the super-block, and the mEEC codeword is
the concatenation of the colors of all super-blocks. This mapping is designed
in such a way that the number of bit errors (and correspondingly the BER) that
occur to a super-block $B$ (and change it to $B'$) can be accurately inferred
from ``comparing'' $B'$ with the color of $B$, or the mEEC sub-codeword for $B$.
However, mEEC has a critical weakness: It makes the aforementioned
(unrealistic) assumption that each mEEC sub-codeword (like the color of $B$ here) is immune from bit errors, and it
does not have a ``contingency plan'' when this assumption fails. In contrast, our MLE estimator used in OddEEC 
is designed specifically for removing this assumption.

 In Figure~\ref{fig:eval_trade}, we use a radar chart to compare our OddEEC with the aforementioned three other EEC schemes qualitatively in five aspects: (1) precomputed table compactness (PTC); (2) decoding speed (DS); (3) estimation accuracy (EA); (4) codeword compactness (CC); (5) robustness to codeword corruption (RtCC). As shown in the figure, OddEEC outperforms or is comparable to all existing EEC schemes across all five metrics. Specifically, the original EEC has weaknesses in BER estimation accuracy and codeword compactness and mEEC is not robust to codeword corruption as described in the two paragraphs above.  We will elaborate why OddEEC outperforms gEEC in decoding speed and outperforms mEEC in precomputed table compactness in our evaluations in Section~\ref{sec:eval_eec}.  We do not include EToW in the radar chart because it was 
subsumed by its successor gEEC~\cite{geec}.

  \subsection{Related Work on other BER Estimation Methods}\label{subsec:ber_meth}
When packets are encoded using error-correction codes (with the intention of correcting all bit errors), they can provide BER estimation as a byproduct (even if they fail to correct all errors as implied by the CRC errors~\cite{crc_cisco}) as follows.
Error-correction codes like low-density parity-check (LDPC) codes
compute a soft BER estimate by first determining the log-likelihood ratio (LLR) for each received bit. Then, the posterior error probability for each bit is calculated from its LLR, and the average of these probabilities is used as the final BER estimate for the packet~\cite{ldpc_non_uniform, ldpc_lecture}.
	
However, the LDPC-based soft BER estimation technique has the following disadvantages compared to EEC schemes. First, it has much higher decoding complexity than EEC schemes.  As demonstrated in~\cite{eec_journal}, the LDPC decoding time is more than an order of magnitude slower than EEC decoding time.  Second, in order to determine the LLR for each bit, the codeword length of LDPC codes is usually very large. For example, as shown in the experiment results in~\cite{eec_journal}, LDPC codes need more than 1200 bits to reach a reasonable BER estimation accuracy for a 12,000-bit packet when BER is near 0.01. In contrast, the original EEC scheme only needs 288 bits to achieve better accuracy.  Furthermore, LDPC-based technique relies on knowing the channel model and its parameters, like the AWGN channel with known noise levels, so that it can generate accurate LLRs at the input of the decoder.  In contrast, all EEC schemes make no assumption except that each EEC bit has the same expected BER as the packet data bit, which can be achieved by uniformly randomly inserting EEC codeword bits within the packet as described in~\cite{sigcomm_ChenZZY10}.

\subsection{Related Work on Odd Sketch}
As explained earlier, Odd sketch was originally proposed for estimating the
Jaccard similarity between two large sets. Almost all
extensions (e.g.,~\cite{odd_sketch_ext}) and
applications (e.g.,~\cite{odd_sketch_graph,odd_sketch_apply}) of Odd sketch use it for
Jaccard similarity estimation. For example,~\cite{odd_sketch_ext} extends Odd
sketch for estimating multi-party Jaccard similarity (from the usual two-party
one).  One exception is
a dimension-reduction technique called Binary-to-Binary Dimension Reduction
(BDR)~\cite{dimen_reduct_oddsketch_2018}.
In BDR, a sparse high-dimensional binary vector is mapped to a dense
low-dimensional binary vector using the Odd sketch (function) $S(\cdot)$.
According to the aforementioned property of Odd sketch, two
binary vectors $\vec{x}$ and $\vec{y}$ (viewed as two sets) that are close in SDC (symmetric difference cardinality, which is equivalent to
Hamming distance as explained earlier) are mapped to $S(\vec{x})$ and $S(\vec{y})$ that
are also close in SDC. This approximate (Hamming) distance-preserving property
of BDR allows it to be used to speedup (approximate Hamming) distance
computations in computational tasks such as clustering. However, the BDR paper
did not study any estimator for the Hamming distance $\|\vec{x}, \vec{y}\|_H$,
likely because such estimators are not needed for the computational tasks that
BDR targets. In contrast, the focus of our OddEEC is to develop such estimators.

As we will show, for estimating SDC, Odd sketch is accurate only when the actual SDC value falls into a narrow range. 
A recent work, named GXBits~\cite{geo_oddsketch}, extends Odd sketch such that it works for a wider estimation range at a graceful loss of estimation 
accuracy.  It does so by changing the uniform hash function in Odd sketch to a geometric hash function (whose 
output has a geometric distribution).   
Similar to Odd sketch, a method of moments (MOM) estimator is derived for the SDC estimation.  However, the MOM estimator of GXBits requires solving a nonlinear optimization problem stemming from this geometric distribution.  This optimization problem is computationally intensive to solve:  It takes  
several to tens of milliseconds on a fast CPU~\cite{geo_oddsketch} using the fastest algorithms such as Newton-Raphson~\cite{Newton-Raphson}.  
While this high computation cost alone makes this MOM estimator, and hence the GXBits scheme, unsuitable for EEC, there is a more compelling disqualifier:
This MOM estimator relies on the aforementioned immunity assumption, and it is not known whether it can be extended to work without the 
immunity assumption;  in contrast, as explained earlier, an EEC scheme, and the estimator it uses, cannot rely on this assumption.

\section{Design of OddEEC}
\label{sec:algo}
In this section, we elaborate on our design of OddEEC, addressing the challenges
in adapting Odd sketch to EEC.
To begin with, in Section~\ref{subsec:overview}, we give an overview of Odd
sketch~\cite{odd-sketch}, on which our scheme is based.
Then, we describe how OddEEC addresses the three aforementioned challenges,
namely applying Odd sketch for Hamming (equivalent to symmetric difference cardinality, or
SDC) estimation in Section~\ref{subsec:oddeec}, our bit sampling scheme for
large packets and high BER in Section~\ref{subsec:samp}, and our maximum
likelihood estimator (MLE) for the \emph{without immunity} case (in which EEC
codewords may also be corrupted) in Section~\ref{subsec:mle}. Finally, we describe how we use multiple resolutions to extend OddEEC to large BER ranges in Section~\ref{subsec:multi_mle}.


\subsection{Overview of Odd Sketch}\label{subsec:overview}
As mentioned earlier, Odd sketch was proposed originally for set similarity estimation in
Jaccard similarity (SSE-J). An Odd sketch maps a large set $A$ to a relatively
tiny (usually $n\ll |A|$) binary array $S_A[1\cdots n]$, as follows. Denote
$[n]\triangleq \{1, 2, \cdots, n\}$. Let $h: U \rightarrow [n]$ ($U$ being the
universe) be a hash function, randomly seeded and fixed thereafter, that maps
every element $a\in U$ to a uniformly random value in $[n]$. Then, each bit of
$S_A[1 \cdots n]$ (say the $i^{th}$ bit $S_A[i]$ for $i = 1, 2, \cdots, n$) is
populated as follows. We count the number of elements $a\in A$ such that $h(a)
  = i$. If this number is {\it odd}, then $S_A[i]=1$;  if this number is {\it even}, then $S_A[i]=0$. In the
sequel, we drop the notation ``$[1\cdots n]$'' from $S_A[1\cdots n]$ whenever
the size $n$ of the Odd sketch is known from the context.

We now describe an important property of Odd sketch that makes it a suitable solution 
to SSE (in both Jaccard and SDC similarity metrics).  Let $S_B$ be the Odd sketch of another set $B$. The property is that,
$S_A\oplus S_B$, the bitwise-XOR of the two Odd sketches for sets $A$ and $B$ respectively, is
equal to the Odd sketch of their symmetric difference $A \triangle B$ as shown in~\cite{odd-sketch}. That is,
denoting the Odd sketch of $A \triangle B$ by $S_{A \triangle B}$, we have
$S_{A \triangle B} = S_A\oplus S_B$.

The symmetric difference cardinality (SDC), denoted by $m = |A \triangle B|$,
can be inferred from the Odd sketch $S_{A \triangle B}$ as follows. Denote by
$z$ the number of 1-bits in $S_{A \triangle B}$. It is clear that $z$ can be
modeled as the ``aftermath'' of the following ``$m$ balls into $n$ bins''
process: Throw the $m$ elements (balls) in $A \triangle B$ uniformly at random
(according to the hash function $h$) into the $n$ bits (bins) in $S_{A
      \triangle B}$. Then, $z$ is equal to the number of bins that end up with an odd
number of balls in it.
Given an observed $z$ (from $S_{A \triangle B}$), the following method of moments (MOM) estimator
for $m$ is given
in~\cite{odd-sketch}:
\begin{equation}\label{eq:estimator}
  \hat{m} = -\frac{n}{2} \ln(1-\frac{2z}{n})
\end{equation}
Using this $\hat{m}$ (estimated SDC), Odd sketch estimates the Jaccard
similarity between $A$ and $B$ by $\hat{J}(A, B) =(|A|+|B|-\hat{m})/(|A|+|B|+\hat{m})$.


The Odd sketch used in OddEEC is a slight modification of the original Odd sketch as follows~\cite{odd-sketch}.  
In the modified Odd sketch, each bin $i$ ($i = 1, 2, \cdots, n$) samples a subset $\eta_i$ from the set of $m$ balls as follows:  Each ball $j$
($j = 1, 2, \cdots, m$) is independently sampled into $\eta_i$ with probability $1/n$.    
Let $m_i=|\eta_i|$ ($i = 1, 2, \cdots, n$) be the number of balls sampled into bin $i$.  
This modification ensures that the random variables (RVs) $m_1$, $m_2$, $\cdots$, $m_n$ are (mutually) independent.
The original Odd sketch works both with and without this modification because its MOM estimator 
uses only the first moment, which is unbiased whether or not these RVs are independent.  
In contrast, OddEEC, which uses an MLE instead (of an MOM estimator) as explained in the second last paragraph of Section~\ref{subsec:intro_oddeec}, requires this modification since the MLE 
requires these RVs to be independent for the MLE to produce accurate estimates, especially when $n$ is small (less than 100) as in this EEC scenario.
Note that we use the term sampling here only to clearly and succinctly explain this modification:  There is no mathematical necessity to use this term here.
Rather, OddEEC involves a different, ``fully qualified'' sampling process that we
will describe in the following Section~\ref{subsec:samp}.  Hence, to avoid any confusion between these two processes, from this point on we will use this term exclusively 
for describing the ``fully qualified'' sampling process.

\subsection{OddEEC with Immunity}\label{subsec:oddeec}
From this subsection on, we describe our adaptation of Odd sketch to an EEC scheme.
As we mentioned earlier, given a packet $P$ (to be transmitted), the sender
appends an EEC codeword $S_P$ (actually the Odd sketch of $P$, as we will show
shortly) to it. After its transmission over the wireless link, $P$ may
encounter some bit errors and become $P'$. However, for ease of presentation,
for the moment we assume that the EEC codeword $S_P$ is \emph{immune} to bit
errors, or that the codeword $S_P$ is received intact (without any bit error). This \emph{with immunity}
assumption will be removed in our MLE scheme in Section~\ref{subsec:mle}.


As just explained, Odd sketch estimates the Jaccard similarity between two
sets, whereas in the EEC problem, the goal is to estimate the \emph{Hamming
  distance} between the transmitted packet $P$ and the received packet $P'$. The first
challenge in our adaptation, therefore, is to address this difference in
similarity metric. Fortunately, we observe that the Hamming distance in EEC is
equivalent to the following SDC metric when $P$ and $P'$ are viewed as two sets
as follows.
We define the set $P$ (with slight abuse in notation) as the set of bit
locations in the packet $P$ whose values are $1$ and define the set $P'$
similarly. In this way, the set of bit locations where errors occur is exactly the symmetric difference $P\triangle P'$, so
the Hamming distance between $P$ and $P'$ is exactly the SDC $|P\triangle P'|$.
The resulting SDC estimation problem, with immunity, can be readily solved by
Odd sketch as we mentioned above. 
In summary, our OddEEC, adapted from Odd sketch, works as follows:
\begin{itemize}
  \item At the sender, the EEC codeword is exactly $S_P$, the Odd sketch of the set $P$
        defined above.
  \item At the receiver, let $S_{P'}$ be the Odd sketch of set $P'$ (computed using the
        same hash function $h(\cdot)$ as the sender's) and $z$ be the number of 1-bits
        in $S_P\oplus S_{P'} = S_{P \triangle P'}$ (explained above). The estimated SDC
        (Hamming distance) $\hat{m}$ is calculated from $z$ by Eq.~\ref{eq:estimator}.
\end{itemize}

The estimation accuracy of our OddEEC scheme can be analyzed as follows. 
To
begin with, we recall the following formula in Odd sketch~\cite{odd-sketch} for
the variance of $z$, the number of 1-bits in $S_P\oplus S_{P'}$ from which
$\hat{m}$ is estimated

\begin{align}
  \boldsymbol{\mathrm{Var}}[z] & = \frac{n^2}{4}((1-\frac{4}{n})^m-(1-\frac{2}{n})^{2m})+\frac{n}{4}(1-e^{-4m/n})\nonumber
  \\&\approx \frac{n}{4}(1-e^{-4m/n}-\frac{4m}{n} e^{-4m/n}).\label{eq:varz_approx}
\end{align}

We note that we do not consider the additional error introduced by our slight modification to Odd sketch (described at the end of Section~\ref{subsec:overview}) here, as its impact on the resulting approximation is negligible.

The following theorem gives an approximate formula for the variance of our MLE $\hat{m}$ (with immunity). This formula, after being extended in Section~\ref{subsec:samp}
for bit sampling, will guide our near-optimal bit sampling policy
later.





\begin{theorem}\label{the:var}
  $\boldsymbol{\mathrm{Var}}[\hat{m}] \approx \frac{n}{4}(e^{4m/n}-4m/n-1)$
\end{theorem}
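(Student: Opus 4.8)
The plan is to treat $\hat m$ as a smooth, deterministic function of the single random variable $z$ and apply the delta method (first-order error propagation). Writing $g(z) \triangleq -\tfrac{n}{2}\ln(1 - 2z/n)$ so that $\hat m = g(z)$, the delta method yields $\mathrm{Var}[\hat m] \approx \big(g'(\mathbb{E}[z])\big)^2\,\mathrm{Var}[z]$. This linearization is justified because, in the parameter regime of interest, $z$ concentrates tightly around its mean, so the linear term of the Taylor expansion of $g$ about $\mathbb{E}[z]$ dominates the neglected higher-order terms.

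First I would differentiate, obtaining $g'(z) = \big(1 - 2z/n\big)^{-1}$ by the chain rule. Next I would compute $\mathbb{E}[z]$. Since $z$ counts the bins receiving an odd number of the $m$ balls, and (under the independent sampling in Section~\ref{subsec:overview}) the count in each bin is $\mathrm{Binomial}(m,1/n)$, the probability that a given bin is odd equals $\tfrac{1}{2}\big(1-(1-2/n)^m\big)$. Summing over the $n$ bins gives $\mathbb{E}[z] = \tfrac{n}{2}\big(1-(1-2/n)^m\big) \approx \tfrac{n}{2}\big(1-e^{-2m/n}\big)$, where I use $(1-2/n)^m \approx e^{-2m/n}$. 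Consequently $1 - 2\mathbb{E}[z]/n \approx e^{-2m/n}$, so that $g'(\mathbb{E}[z]) \approx e^{2m/n}$ and $\big(g'(\mathbb{E}[z])\big)^2 \approx e^{4m/n}$.

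Finally I would substitute the variance formula from Eq.~\ref{eq:varz_approx} and simplify:
\[
\mathrm{Var}[\hat m] \approx e^{4m/n}\cdot \frac{n}{4}\Big(1 - e^{-4m/n} - \tfrac{4m}{n}e^{-4m/n}\Big) = \frac{n}{4}\Big(e^{4m/n} - \tfrac{4m}{n} - 1\Big),
\]
since the exponential prefactor cancels against the two $e^{-4m/n}$ terms inside the parentheses, which is exactly the claimed expression.

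The main obstacle is the rigor of the delta-method step: one must argue that the dropped second- and higher-order Taylor terms of $g$ are genuinely negligible. This requires both that $\mathrm{Var}[z]$ be small relative to the curvature scale of $g$, and that $\mathbb{E}[z]$ stay safely away from the pole of $g$ at $z=n/2$, where $g'$ and $g''$ blow up. Note that as $m$ grows the factor $e^{4m/n}$ inflates the variance, signaling precisely the accuracy degradation in the high-$m$ (high-BER) region that motivates the bit-sampling extension of the next section; I would therefore flag that the approximation is intended for the moderate-$m$ regime, while observing that the same delta-method template transfers directly once sampling is incorporated.
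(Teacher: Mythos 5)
Your proposal is correct and follows essentially the same route as the paper: the paper's Proposition~\ref{prop:var} is exactly your delta-method step, obtained by Taylor-expanding the estimator around $\alpha = \boldsymbol{\mathrm{E}}[z] = \frac{n}{2}(1-e^{-2m/n})$ to get $\boldsymbol{\mathrm{Var}}[\hat{m}] \approx e^{4m/n}\,\boldsymbol{\mathrm{Var}}[z]$, which is then combined with Eq.~\ref{eq:varz_approx} just as you do. Your closing caveats (truncation validity and staying away from the pole at $z=n/2$) match the paper's appeal to the small-relative-error truncation argument of its cited reference and its later ``saturation'' discussion.
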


Theorem~\ref{the:var} can be directly inferred from the combination of
Eq.~\ref{eq:varz_approx} and the following proposition.

\begin{prop}\label{prop:var}
  $\boldsymbol{\mathrm{Var}}[\hat{m}] \approx e^{4m/n} \boldsymbol{\mathrm{Var}}[z] $
\end{prop}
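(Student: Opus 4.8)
The plan is to recognize that $\hat{m}$ is a deterministic (smooth, monotone) function of the single random variable $z$, namely $\hat{m} = g(z)$ where $g(z) = -\frac{n}{2}\ln\!\left(1 - \frac{2z}{n}\right)$, and then to propagate the variance of $z$ through $g$ by the delta method (a first-order Taylor expansion of $g$ about its mean). The delta method gives the approximation $\mathrm{Var}[\hat{m}] = \mathrm{Var}[g(z)] \approx \big(g'(\mathbb{E}[z])\big)^2\,\mathrm{Var}[z]$, so the entire proposition reduces to showing that the squared derivative factor $\big(g'(\mathbb{E}[z])\big)^2$ equals $e^{4m/n}$.

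First I would differentiate: a direct computation gives $g'(z) = \frac{1}{1 - 2z/n}$. Next I would supply the mean $\mathbb{E}[z]$ from the ``$m$ balls into $n$ bins'' model underlying the Odd sketch. Since each of the $m$ balls lands in a fixed bin independently with probability $1/n$, a given bin holds an odd number of balls with probability $\frac{1}{2}\big(1 - (1 - 2/n)^m\big)$, whence
\begin{equation}
\mathbb{E}[z] = \frac{n}{2}\Big(1 - \big(1 - \tfrac{2}{n}\big)^m\Big) \approx \frac{n}{2}\big(1 - e^{-2m/n}\big),
\end{equation}
using the standard approximation $(1 - 2/n)^m \approx e^{-2m/n}$. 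Substituting this into the argument of $g'$ yields $1 - \frac{2\mathbb{E}[z]}{n} \approx e^{-2m/n}$, so $g'(\mathbb{E}[z]) \approx e^{2m/n}$ and therefore $\big(g'(\mathbb{E}[z])\big)^2 \approx e^{4m/n}$. Combining this with the delta-method approximation completes the argument and, together with Eq.~\ref{eq:varz_approx}, immediately yields Theorem~\ref{the:var}.

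The main obstacle, and the only place that needs care, is the justification that the first-order delta method is adequate here rather than a liability. The approximation error comes from two sources: truncating the Taylor expansion of $g$ at first order (neglecting the curvature term involving $g''$ and $\mathbb{E}[(z - \mathbb{E}[z])^2]$ and higher central moments), and the exponential approximations $(1 - 2/n)^m \approx e^{-2m/n}$. I would argue that $z$ concentrates tightly enough around its mean in the operating regime of interest that the higher-order Taylor terms are negligible relative to the leading term; this is the same asymptotic regime (moderately large $n$, with $z$ bounded away from $n/2$ so that $g$ stays well-conditioned) in which the companion variance formula Eq.~\ref{eq:varz_approx} was itself derived, so the two approximations are mutually consistent. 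I would flag that the estimator degrades near $z = n/2$, where $g'$ blows up, which is precisely the high-SDC boundary that motivates the bit-sampling extension of Section~\ref{subsec:samp}; within the intended range the first-order analysis is accurate.
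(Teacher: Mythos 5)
Your proposal is correct and takes essentially the same route as the paper: the paper's own proof is precisely this first-order Taylor (delta-method) linearization of $\hat{m} = -\frac{n}{2}\ln\left(1-\frac{2z}{n}\right)$ around $\alpha = \boldsymbol{\mathrm{E}}[z] = \frac{n}{2}\left(1-e^{-2m/n}\right)$, giving $\hat{m}\approx m+\frac{n}{n-2\alpha}(z-\alpha)$ and hence the factor $\left(\frac{n}{n-2\alpha}\right)^2 = e^{4m/n}$. Your justification of the truncation also mirrors the paper, which appeals to a prior small-truncation-error argument (relative error generally below 10\%) for a similar estimator.
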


\begin{proof}
  Denoting the function $\log(1-\frac{2z}{n})$ as $f(z)$, we approximate the RHS
  of Eq.~\ref{eq:estimator} by the Taylor series expansion of $f(z)$ around
  $\alpha =      \boldsymbol{\mathrm{E}}[z]= \frac{n}{2}(1-e^{-2m/n})$.
    \begin{align}
    \hat{m} & = -\frac{n}{2} \times f(z) \nonumber                                                                         \\
            & = - \frac{n}{2} (f(\alpha) + (z-\alpha)f'(\alpha) + \frac{1}{2}(z-\alpha)^2 f''(\alpha) + \cdots) \nonumber \\
            & = - \frac{n}{2}(-\frac{2m}{n} - \frac{2}{n-2\alpha} (z-\alpha) + \cdots)  \nonumber                          \\
            & \approx m+\frac{n}{n-2\alpha}(z-\alpha) \label{eq:exp_approx}
  \end{align}

  As shown in~\cite{whang} for a similar but different estimator (designed for
   a similar but different sketch), truncating after two terms in Taylor
  series like in Eq.~\ref{eq:exp_approx} is a reasonable approximation with relative error generally
  less than 10\%. Based on Eq.~\ref{eq:exp_approx}, we arrive at
  Proposition~\ref{prop:var} as follows:
  \begin{align}
     & \boldsymbol{\mathrm{Var}}[\hat{m}] \approx \boldsymbol{\mathrm{Var}}[m+\frac{n}{n-2\alpha}(z-\alpha) ]  \nonumber \\
     & = (\frac{n}{n-2\alpha})^2  \boldsymbol{\mathrm{Var}}[z]  \nonumber                                                \\
     & = (\frac{n}{n \times e^{-2m/n}})^2 \boldsymbol{\mathrm{Var}}[z]  = e^{4m/n} \boldsymbol{\mathrm{Var}}[z]
  \end{align}
\end{proof}

\subsection{Bit Sampling and Analysis}\label{subsec:samp}
In this subsection, we describe and analyze our bit sampling technique in
OddEEC, which is essential for accurate BER estimation for long packets and
high BER values. To keep this analysis as simple as possible, we keep the
aforementioned immunity assumption (of EEC codewords).

\noindent
\textbf{The \emph{Saturation} Issue.} 
To motivate our bit sampling technique, we show a challenging scenario for
OddEEC, in which without bit sampling, the sketch (EEC codeword) would
``saturate'' and result in severe accuracy degradation. As we will describe
in~Section~\ref{sec:eval_eec}, the maximum target packet length of EEC is about $12,
  000$ bits and the maximum target error rate is about $0.01$. When these two
maximum parameter values are taken, each packet would have $m=12,000 \times
  0.01 = 120$ bit errors on average. Each OddEEC codeword, in this case, is a
binary array of $n=96$ bits (matching that in gEEC~\cite{geec}, the state of
the art).

The issue is that, without bit sampling, the value of $m$ is
way above the desired range for accurate estimation, for the following reason.
Recall that $\hat{m}$ is inferred from $z$ as in Eq.~\ref{eq:estimator}. As
shown in~\cite{odd-sketch}, $z$ is a binomial RV (random variable)
$\mathcal{B}(n,p)$ with $n$ trials and success rate
\begin{equation} \label{eq:prob_odd}
  p = (1-(1-2/n)^m)/2.
\end{equation}
When $m=120$ and $n=96$, we have $p=0.46$. As a result, the observed value of $z$, as a realization of the
above binomial RV $\mathcal{B}(96, 0.46)$, is greater than or equal to $n/2=48$ with roughly $1/4$ probability.
However, by Eq.~\ref{eq:estimator}, $\hat{m}$ is undefined when $z \ge n/2$,
which means that OddEEC would fail to produce any meaningful estimation every one out of four times!

More precisely,
when $m$ is large (compared to $n$), $p$ converges toward $1/2$ (from the
left), so $z$ would have a nontrivial probability to be close to or even
greater than $n/2$. Such realizations of $z$ would lead to estimates that are
highly erroneous or even meaningless, since $\hat{m}$, as a function of $z$ as
in Eq.~\ref{eq:estimator}, has a steep upward slope when $z$ approaches $n/2$
(from the left) and is undefined when $z \ge n/2$.

A natural solution to this issue is to reduce the ``number of balls $m$" in the
codewords, 
so that the sketch does not saturate.
In the original Odd sketch~\cite{odd-sketch}, this is done by the Broder's
embedding~\cite{minhash}, which compresses two large sets $A$ and $B$ into two
smaller ones $A'$ and $B'$, respectively. This embedding works for the original
Odd sketch (that estimates the Jaccard similarity), since it ensures $J(A, B)
  \approx J(A', B')$, or in other words roughly preserves the Jaccard similarity.  However, it does not work in OddEEC,
since $|A' \triangle B'|\ll |A \triangle B|$ (as $A'$ and $B'$ are much
smaller in size), or in other words it significantly distorts the Hamming distance.

\noindent
\textbf{The Bit Sampling Technique.}
Hence, we use the following bit sampling technique
instead (similarly to that in previous EEC works~\cite{hua_simpler_2012,geec})
that reduces $m$ to a known proportion of the Hamming distance between the transmitted packet $P$ and 
the received packet $P'$.  As mentioned in the last paragraph in Section~\ref{subsec:overview}, we simply sample bits uniformly without replacement at rate
$\beta$, from both $P$ and $P'$ consistently (i.e., at the same bit locations).  
We denote the sampled packets (or sets of sampled bits) as $P_{sam}$ and $P'_{sam}$ respectively.  With
bit sampling, the OddEEC codewords $S_P$ and $S_{P'}$ become Odd sketches of
$P_{sam}$ and $P'_{sam}$, respectively.  Nevertheless, we keep using their
original notations ($S_P$ and $S_{P'}$) in the sequel for simplicity.  This bit sampling operation can be performed in a computationally efficient manner
using SIMD instructions or hardware as we will elaborate in the end of Section~\ref{subsec:multi_mle}.

Since the Hamming distance between $P$ and $P'$ is roughly reduced by a factor of $1/\beta$ by
this bit sampling, our estimator $\hat{m}$ needs to be scaled accordingly as
follows
\begin{equation}\label{eq:samp_estimator}
  \hat{m} = \frac{\hat{\mu}}{\beta} =  -\frac{n}{2\beta} \log(1-\frac{2z}{n}).
\end{equation}
Note that here we use a different letter $\mu$ to denote the Hamming distance between $P_{sam}$ and $P'_{sam}$ (and $\hat{\mu}$ for its estimate computed from Eq.~\ref{eq:estimator}), so that it is not to be confused with the Hamming distance $m$ and its estimate $\hat{m}$ between unsampled packets $P$ and $P'$.


\noindent
\textbf{Variance Analysis.}
The following theorem extends Theorem~\ref{the:var} to the variance of
$\hat{m}$ with bit sampling, in which the first term corresponds to the RHS of
Theorem~\ref{the:var} with $m$ replaced by $\beta m$ (as an effect of bit
sampling), and the second term corresponds to the extra variance introduced by
the bit sampling process.
\begin{theorem}\label{theo:var_samp}
  $\boldsymbol{\mathrm{Var}}[\hat{m}] \approx \frac{n}{4 \beta^2}(e^{4\beta m/n}-4\beta m/n-1)+ \frac{m(1-\beta)}{\beta}$
\end{theorem}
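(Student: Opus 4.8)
The plan is to decompose the total variance of $\hat{m}$ according to the two independent sources of randomness present once bit sampling is introduced: (i) the bit sampling itself, which determines $\mu=|P_{sam}\triangle P'_{sam}|$, the number of bit errors surviving into the sampled packets; and (ii) the Odd-sketch hashing, which, conditioned on $\mu$, throws those $\mu$ balls into the $n$ bins and produces $z$. Since $\hat{m}=\hat{\mu}/\beta$ with $\hat{\mu}$ computed from $z$ by Eq.~\ref{eq:estimator}, I would apply the law of total variance conditioning on $\mu$,
\[
\boldsymbol{\mathrm{Var}}[\hat{m}] = \boldsymbol{\mathrm{E}}\big[\boldsymbol{\mathrm{Var}}[\hat{m}\mid \mu]\big] + \boldsymbol{\mathrm{Var}}\big[\boldsymbol{\mathrm{E}}[\hat{m}\mid \mu]\big].
\]
This identity is exact (no cross term) and cleanly separates the two summands appearing in the statement.

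First I would model $\mu$. Sampling each of the $m$ error locations independently with probability $\beta$ makes $\mu$ a $\mathcal{B}(m,\beta)$ variable (and the without-replacement/hypergeometric version agrees to leading order since $m\ll l$), so $\boldsymbol{\mathrm{E}}[\mu]=\beta m$ and $\boldsymbol{\mathrm{Var}}[\mu]=m\beta(1-\beta)$. For the ``between'' term, I would use the conditional unbiasedness of the Odd-sketch estimator: applying the two-term Taylor truncation of Eq.~\ref{eq:exp_approx} (Proposition~\ref{prop:var}) with $m$ replaced by $\mu$ gives $\boldsymbol{\mathrm{E}}[\hat{\mu}\mid\mu]\approx\mu$, hence $\boldsymbol{\mathrm{E}}[\hat{m}\mid\mu]\approx\mu/\beta$. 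Therefore $\boldsymbol{\mathrm{Var}}\big[\boldsymbol{\mathrm{E}}[\hat{m}\mid\mu]\big]\approx \boldsymbol{\mathrm{Var}}[\mu]/\beta^2 = m(1-\beta)/\beta$, which is exactly the second term of the theorem.

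For the ``within'' term, I would invoke Theorem~\ref{the:var} conditionally, with $m$ replaced by the realized $\mu$, together with the scaling $\boldsymbol{\mathrm{Var}}[\hat{m}\mid\mu]=\boldsymbol{\mathrm{Var}}[\hat{\mu}\mid\mu]/\beta^2$, to get $\boldsymbol{\mathrm{Var}}[\hat{m}\mid\mu]\approx \frac{n}{4\beta^2}(e^{4\mu/n}-4\mu/n-1)$. Taking the expectation over $\mu$ leaves $\boldsymbol{\mathrm{E}}[e^{4\mu/n}]$ and $\boldsymbol{\mathrm{E}}[\mu]=\beta m$ to evaluate. Using the binomial moment generating function, $\boldsymbol{\mathrm{E}}[e^{4\mu/n}]=(1-\beta+\beta e^{4/n})^m$, which I would expand for small $4/n$ to show it equals $e^{4\beta m/n}$ up to a multiplicative factor $e^{8m\beta(1-\beta)/n^2}=1+O(m/n^2)$. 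Discarding this higher-order correction yields $\boldsymbol{\mathrm{E}}\big[\boldsymbol{\mathrm{Var}}[\hat{m}\mid\mu]\big]\approx \frac{n}{4\beta^2}(e^{4\beta m/n}-4\beta m/n-1)$, the first term; summing the two terms gives the theorem.

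The main obstacle I anticipate is the ``within'' term, specifically justifying $\boldsymbol{\mathrm{E}}[e^{4\mu/n}]\approx e^{4\beta m/n}$: one must argue that substituting $\mu$ by its mean $\beta m$ inside the already second-order conditional-variance expression introduces only a negligible relative $O(m/n^2)$ correction in the operating regime ($m\approx 120$, $n\approx 96$, moderate $\beta m/n$). The remaining approximations — the binomial model for $\mu$, and the two-term Taylor truncation supplying both conditional unbiasedness and the scaled conditional variance — are inherited directly from the sampling model and from Theorem~\ref{the:var}/Proposition~\ref{prop:var}, so they require no new work.
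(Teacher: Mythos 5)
Your proposal is correct, and it reaches the theorem by a genuinely different route from the paper. The paper does not condition on $\mu$: it writes $\boldsymbol{\mathrm{Var}}[\hat{m}] = \frac{1}{\beta^2}\boldsymbol{\mathrm{E}}[(\hat{\mu}-\mu+\mu-\beta m)^2]$, expands into two squared terms plus a cross term $\frac{2}{\beta^2}\boldsymbol{\mathrm{E}}[(\hat{\mu}-\mu)(\mu-\beta m)]$, and argues the cross term is negligible because $\hat{\mu}-\mu$ and $\mu-\beta m$ draw their randomness from the two different hash functions $h(\cdot)$ and $g(\cdot,\cdot)$ and $\boldsymbol{\mathrm{E}}[\mu-\beta m]\approx 0$; it then identifies $\boldsymbol{\mathrm{E}}[(\hat{\mu}-\mu)^2]$ with $\boldsymbol{\mathrm{Var}}[\hat{\mu}]$ via a small-bias argument (citing the same style of proof as in the reference you invoke) and finally replaces $\boldsymbol{\mathrm{Var}}[\hat{\mu}]$ by the conditional value at $\mu=\beta m$, justified qualitatively by slow variation of $\frac{n}{4}(e^{4\mu/n}-4\mu/n-1)$ near $\mu=\beta m$ together with concentration of the binomial $\mu$. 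Your law-of-total-variance decomposition buys you an exact identity with no cross term to dispose of (the independence of sketch hashing from bit sampling enters implicitly through the conditioning instead), and your MGF computation $\boldsymbol{\mathrm{E}}[e^{4\mu/n}]=(1-\beta+\beta e^{4/n})^m = e^{4\beta m/n}\,e^{8m\beta(1-\beta)/n^2}(1+o(1))$ makes quantitative the very step the paper treats heuristically: in the operating regime ($m\approx 120$, $n=96$, $\beta=0.5$) your correction factor is about $1.03$, comfortably below the $10\%$--$50\%$ error the paper already tolerates in Proposition~\ref{prop:var} and Figure~\ref{fig:est_real}. The paper's approach, in turn, is slightly more economical in that it needs only the unconditional small-bias fact, whereas you invoke conditional unbiasedness $\boldsymbol{\mathrm{E}}[\hat{\mu}\mid\mu]\approx\mu$ in the between term (note also that the paper's $\boldsymbol{\mathrm{Var}}[\hat{m}]$ is really the MSE about the true $m$, while yours is the variance about $\boldsymbol{\mathrm{E}}[\hat{m}]$; the two coincide under the same small-bias approximation, so nothing is lost). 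Both routes share the $\mathcal{B}(m,\beta)$ model for $\mu$ yielding the second term $m(1-\beta)/\beta$ identically, and both inherit the two-term Taylor truncation from Theorem~\ref{the:var} and Proposition~\ref{prop:var}; your anticipated obstacle is exactly the point where the two proofs diverge, and your resolution of it is, if anything, sharper than the paper's.
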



\begin{proof}
  $\boldsymbol{\mathrm{Var}}[\hat{m}]$ can be divided into three parts as follows:
  \begin{align}
    \boldsymbol{\mathrm{Var}}[\hat{m}] & = \boldsymbol{\mathrm{E}}[(\frac{\hat{\mu}}{\beta}-m)^2] = \frac{1}{\beta^2} \boldsymbol{\mathrm{E}}[(\hat{\mu}-\beta m)^2] \nonumber                           \\
                                       & = \frac{1}{\beta^2} \boldsymbol{\mathrm{E}}[(\hat{\mu}-\mu + \mu - \beta m)^2] \nonumber                                                                        \\
                                       & =  \frac{1}{\beta^2}  \boldsymbol{\mathrm{E}}[(\hat{\mu}-\mu)^2] +  \frac{1}{\beta^2}  \boldsymbol{\mathrm{E}}[(\mu-\beta m)^2]   \nonumber                     \\
                                       & + \frac{2}{\beta^2} \boldsymbol{\mathrm{E}}[(\hat{\mu}-\mu)(\mu-\beta m)] \nonumber                                                                             \\
                                       & \approx  \frac{1}{\beta^2}  \boldsymbol{\mathrm{E}}[(\hat{\mu}-\mu)^2] + \frac{1}{\beta^2}  \boldsymbol{\mathrm{E}}[(\mu-\beta m)^2] \label{eq:samp_var_decomp}
  \end{align}

  In above equations, we approximate $\frac{2}{\beta^2}
    \boldsymbol{\mathrm{E}}[(\hat{\mu}-\mu)(\mu-\beta m)]$ to 0, since the two RVs
  $\hat{\mu}-\mu$ and $\mu-\beta m$ are approximately uncorrelated (since the
  randomness in the former RV comes from the random seeds in the hash function
  $h(\cdot)$, whereas that in the latter RV comes from another hash function
  $g(\cdot, \cdot)$), and the fact $\boldsymbol{\mathrm{E}}[\mu-\beta m] \approx
    0$.


  To prove Theorem~\ref{theo:var_samp}, it suffices to prove that the two terms
  in Eq.~\ref{eq:samp_var_decomp} are approximately equal to the corresponding
  terms in Theorem~\ref{theo:var_samp}. For the second term, this is obvious,
  since $\boldsymbol{\mathrm{E}}[(\mu-\beta m)^2] =
    \boldsymbol{\mathrm{Var}}[\mu] = m\beta(1-\beta)$ given that $\mu$ is
  distributed as a binomial RV $\mathcal{B}(m, \beta)$. For the first term,
  $\boldsymbol{\mathrm{E}}[(\hat{\mu}-\mu)^2]$ is precisely the (unconditional)
  mean squared error (MSE) of $\mu$, whereas the corresponding term in
  Theorem~\ref{theo:var_samp} is approximately the conditional variance
  $\boldsymbol{\mathrm{Var}}[\hat{\mu}|\mu=\beta m]$ (replacing $m$ in
  Theorem~\ref{the:var} by $\mu=\beta m$). To see the equivalence between these
  two terms, note that the MSE $\boldsymbol{\mathrm{E}}[(\hat{\mu}-\mu)^2]$ is
  equal to $\boldsymbol{\mathrm{Var}}[\hat{\mu}] + \mathcal{E}^2$ by definition,
  where $\mathcal{E}\triangleq \boldsymbol{\mathrm{E}}[\hat{\mu}-\mu]$ is the
  bias of the estimator $\hat{\mu}$. Since the bias $\mathcal{E}$ is small (using
  arguments almost identical to that in~\cite{whang} for a similar ``small bias
  proof''), we know that the MSE is approximately equal to the unconditional
  variance $\boldsymbol{\mathrm{Var}}[\hat{\mu}]$. Finally,
  $\boldsymbol{\mathrm{Var}}[\hat{\mu}]$ can be approximated by the conditional
  variance $\boldsymbol{\mathrm{Var}}[\hat{\mu}|\mu=\beta m]$, for the following
  two facts: (1) the value of $\boldsymbol{\mathrm{Var}}[\hat{\mu}] \approx
    \frac{n}{4}(e^{4\mu/n}-4\mu/n-1)$ (according to Theorem~\ref{the:var}), as a
  function of $\mu$, is slowing varying in the neighborhood around $\mu = \beta
    m$; and (2) the measure of the Binomial RV $\mu$ heavily concentrates around
  its mean $\beta m$.
\end{proof}

\noindent
\textbf{Numerical Analysis.}
The primary purpose of Theorem~\ref{theo:var_samp} is to guide the parameter tuning (say of the 
sampling rate $\beta$) in OddEEC.
Since Theorem~\ref{theo:var_samp} uses some approximations (e.g., $n$ is not large enough for the approximation in Proposition~\ref{prop:var} to be accurate), 
we would like to know if it is still accurate enough for this purpose.  To this end, 
Figure~\ref{fig:est_real} compares the estimated relative mean squared error (rMSE), computed using Theorem~\ref{theo:var_samp}, to the actual rMSE of OddEEC (using the method of moments estimator in Eq.~\ref{eq:estimator}). The results in Figure~\ref{fig:est_real} are based on 1000 experiments with $n=96$ and $\beta=0.5$. 
As shown in the figure, although Theorem~\ref{theo:var_samp} underestimates the error by 
 approximately 20\% to 50\% relative to the true rMSE, the overall trend of the estimated rMSE closely matches that of the real rMSE. In particular, Theorem~\ref{theo:var_samp} effectively predicts the bit error rate (BER) $\theta$ at which OddEEC achieves optimal performance.  As such, Figure~\ref{fig:est_real} demonstrates the efficacy of Theorem~\ref{theo:var_samp} for its primary purpose.

\begin{figure}[!htb]
  \begin{center}
    \includegraphics[width=0.9\columnwidth]{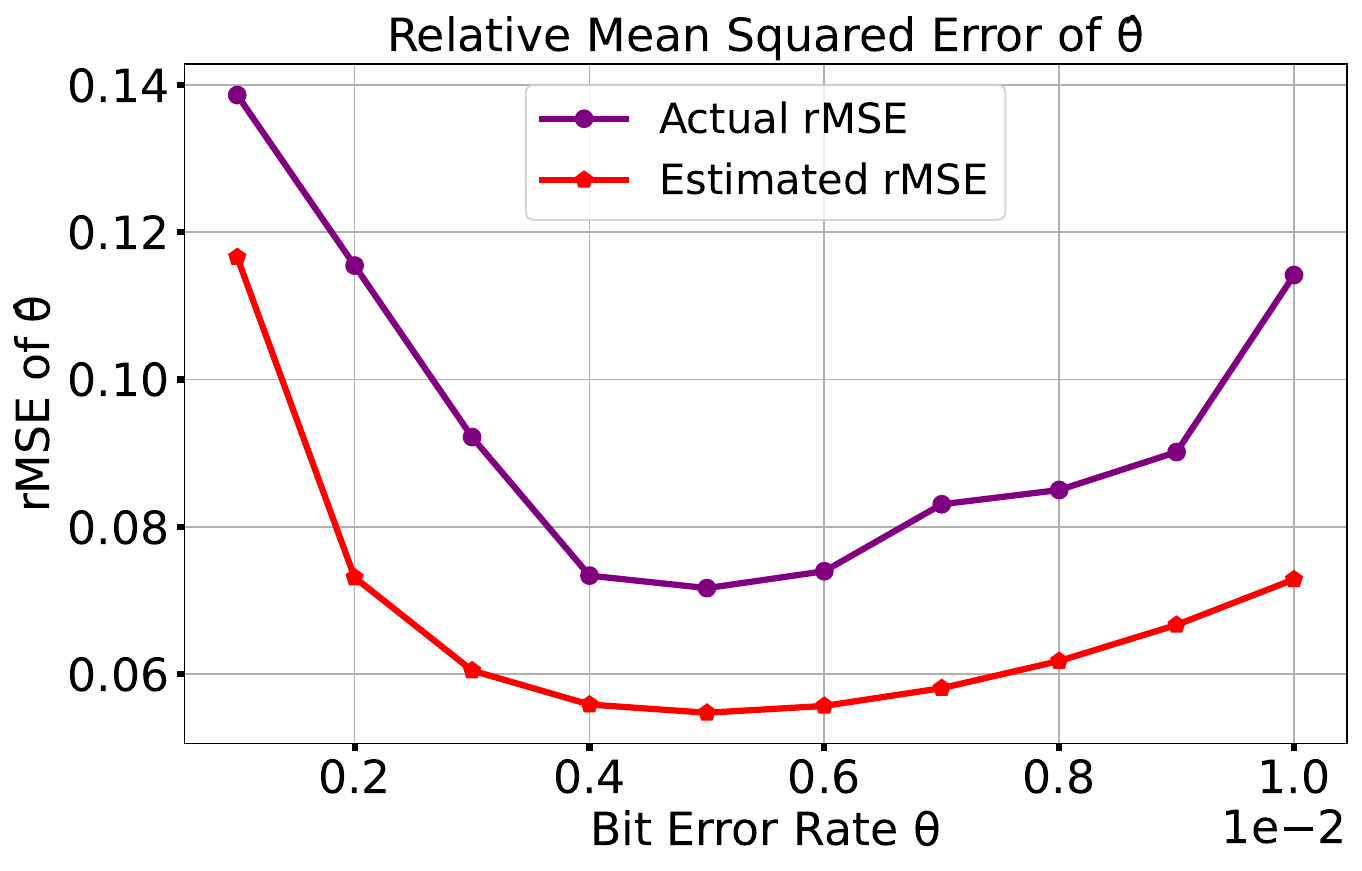}
    \caption{ Estimated rMSE versus actual rMSE when packet length $l=12,000$ bits for BER ranges from $0.001$ to $0.01$.}\label{fig:est_real}
  \end{center}
\end{figure}

\noindent
\textbf{Policy for Selecting Sampling Length.}
Apart from avoiding the ``saturation'' issue above, the bit sampling technique
also allows OddEEC to flexibly tune the sampling rate $\beta$ for near-optimal
estimation accuracy. For a fixed packet length $l$ and BER $\theta$ (and the
corresponding Hamming distance $m=\theta l$), guided by the variance formula in
Theorem~\ref{theo:var_samp}, we can select a near-optimal sampling rate $\beta$
as the one that minimizes the variance $\boldsymbol{\mathrm{Var}}[\hat{m}]$.

In reality, we cannot determine $\beta$ this way, since the receiver does not
know $\theta$ beforehand, even approximately. Hence, in OddEEC, the sampling
policy is oblivious of $\theta$; it is also extremely simple (so as not to
incur much computational overhead). Our sampling policy is to use a fixed {\it
    sampling length} $r$ (bits) in the following sense: Given a packet $P$ of
length $l$ (bits), bits in $P$ are sampled with rate $\beta = r/l$ if $l>r$ (so
roughly $r$ bits are sampled); sampling is not used if $l \le r$.




Now we explain how to select this $r$ (to be fixed for OddEEC operation). Given
the target ranges of $l$ (say $[l_{min}, l_{max}]$) and $\theta$ (say
$[\theta_{min}, \theta_{max}]$), the maximum number of ``balls'' (bit
errors) sampled by our policy is, {\it on average}, $\boldsymbol{\mathrm{E}}[\mu_{max}] = r\theta_{max}$ (assuming $l_{max}> r$).  We need to make sure $\boldsymbol{\mathrm{E}}[\mu_{max}]/n \le 2/3$ (called the safety
condition), since if $\mu/n\le 2/3$ (wherein $\mu$ is the actual number of ``balls'' sampled),
the probability (risk) for saturation (having $z\ge n/2$ as mentioned above)
would be acceptably low (no more than $0.006$).
Provided this safety condition is to be ensured, we tune $r$ to optimize the
accuracy of the estimator $\hat{m}$ (see Eq.~\ref{eq:estimator}), for the case
$\theta = (\theta_{min}+\theta_{max})/2$ (the actual BER is the midpoint of the
range) and $l = l_{max}$ (worst-case packet length). For example, when
$l_{max}=12,000$ and $\theta$ ranges from $0.001$ to $0.01$,
we optimize $\beta$
(correspondingly $r = \beta l$ here) assuming $m = (0.001 + 0.01) * 12,000/2 =
  66$, resulting in $r = 6000$ bits (so in this case, the effective $\beta$ is $6000/12,000 = 50\%$).
When $n = 96$ bits (OddEEC codeword length), $r = 6000$ ensures the safety
condition above (but barely as will be elaborated later), since in this case $\mu = 6000 \times 0.01 = 60 < 2/3 \times 96$. 

\subsection{Maximum Likelihood Estimator}\label{subsec:mle}
In this subsection, we describe the maximum likelihood estimator (MLE) of OddEEC. We start with the description of MLE in one resolution and then describe how it can be extended to the multiple resolutions case.
In the following description, we remove the {\it with immunity} assumption made in
Section~\ref{subsec:oddeec}.  In other words, we now assume (realistically) that
the EEC codeword transmitted with a packet will also be subject to bit errors
(i.e., corrupted) with same BER $\theta$ (\emph{without immunity}). 
Unlike in previous sections, where we first derive $\hat{m}$ and then estimate
BER as $\hat{\theta} = \hat{m}/l$, here we directly estimate the BER $\theta$
by its MLE $\hat{\theta}$, which we derive as
follows.  Due to this difference, a slightly different probabilistic model is used here, and as a result
the probability $p$ derived in Eq.~\ref{eq:prob_odd_mle} differs slightly from that in Eq.~\ref{eq:prob_odd}. 


Recall that, in the case with immunity, we used $z$, the number of $1$-bits in
$S_{P} \oplus S_{P'}$, for estimating BER as described in~\ref{subsec:oddeec}.
However, in the case without immunity, the received EEC codeword, denoted as
$S^{c}_{P}$ (c for corruption) is in general not the same as the original
(uncorrupted) one $S_P$.
In this case, the receiver observes, instead of $z$, the number of 1-bits in
$S^{c}_{P} \oplus S_{P'}$, which we denote by $z^c$; here $S_{P'}$ is the
Odd sketch of the corrupted packet $P'$.
Suppose the observed value of $z^c$ is $\phi$. Then, the MLE $\hat{\theta}$ is
given by the following formula:
\begin{equation}\label{eq:mle}
  \hat{\theta} = \underset{\theta}{\mathrm{argmax}} \, Pr[z^{c}=\phi|\theta]
\end{equation}

The calculation of the conditional probability $Pr[z^{c}=\phi|\theta]$ (the
likelihood of $\theta$ given the observation $z^c$) breaks down into two parts. 
The first part ``reasons about'' $Pr[z=k | \theta]$, the conditional
probability of $z$ (that the receiver does not observe) taking a certain fixed
value $k$.
The second part ``reasons about'' $Pr[z^{c}=\phi | z=k,\theta]$, the
conditional probability of $z^{c}$ (that the receiver actually observes) of
taking the observed value $\phi$.
By the following total probability formula (that summarizes over all possible
values of $k$), we obtain
\begin{equation} \label{eq:total_prob}
  Pr[z^{c}=\phi|\theta] = \sum_{k=0}^{n} Pr[z=k | \theta] \cdot Pr[z^{c}=\phi|z=k,\theta]
\end{equation}

It remains to derive these two conditional probabilities (parts).
Since the first part corresponds to the case \emph{with immunity} (described
earlier in Section~\ref{subsec:oddeec}), $z$ follows a binomial distribution
$\mathcal{B}(n,p)$ (described 
in Section~\ref{subsec:samp}), where $n$ is the OddEEC codeword length (in number of bits)
and $p$ is given as follows:
\begin{align}
	p &=  \sum_{j=0}^{2j+1 \leq r} \binom{r}{2j+1} (\theta/n)^{2j+1}(1- \theta/n)^{r-2j-1}  \nonumber \\
	&= (1-(1-2\theta/n)^r)/2 \label{eq:prob_odd_mle}
\end{align}




Then the first conditonal probability in Eq.~\ref{eq:total_prob} is,
\begin{equation} \label{eq:prob_binomial}
  Pr[z=k|\theta] = \binom{n}{k} p^k (1-p)^{n - k}
\end{equation}

Now we derive the second part $Pr[z^{c}=\phi|z=k,\theta]$. The event
$\{z^{c}=\phi|z=k,\theta\}$ therein corresponds to that a bit string containing $k$
1-bits and $n-k$ 0-bits ends up containing $\phi$ 1-bits and $n-\phi$ 0-bits
after transiting through a channel with BER $\theta$.
The probability of this event has been well-studied
(e.g.,~\cite{information_theory}) and given by the following formula:
\begin{align}
  & Pr[z^{c} =\phi|z=k,\theta]  \nonumber \\
  & = \sum_{x} \binom{k}{x} \binom{n - k}{\phi - x} (1 - \theta)^{n - k-\phi+2x} \theta^{k+\phi-2x} \label{eq:prob_z}
\end{align}
This concludes the derivation of our MLE.

\subsection{Multi-resolution OddEEC}\label{subsec:multi_mle}

So far, our OddEEC scheme only has one resolution defined by the bit sampling length $r$ (described in the end of Section~\ref{subsec:samp}).  
However, as explained at the end of Section~\ref{subsec:samp}, a single resolution can only ``cover" (allows for accurate BER estimations in) a certain range $[\theta_{min}, \theta_{max}]$ due to the ``saturation issue";  and this range is often smaller than the 
range of BER values that of practical interest.  For example, as shown at the end of Section~\ref{subsec:samp}, if we use the optimized
sampling length $r=6000$ for a $12,000$-bit packet, (the MLE of) OddEEC gives accurate estimations only for the BER range $[0.001,0.01]$. 
As we will show in Section~\ref{sec:eval_eec}, when the actual BER is $0.05$, the BER estimation accuracy of this single-resolution OddEEC is
poor.


To deal with this coverage problem, we propose a multi-resolution OddEEC scheme.
In a multi-resolution scheme, the OddEEC codeword is partitioned into multiple subcodes that share the total (codeword length) budget, 
and a different sampling rate (length) is used for each subcode.  Since the sampling processes for the subcodes are independent, 
the joint likelihood function for a vector of $z^c$ observations is the product of their likelihood functions (derived in Eq.~\ref{eq:total_prob}).
As we will show in Section~\ref{subsec:comp_geec}, for the BER estimation range of $[0.001, 0.05]$ that we use in this paper, two resolutions provide 
a good ``coverage."  Hence, we will fix the maximum number of resolutions to 2 in the rest of this paper.

Now we show that our two-resolution OddEEC scheme has a very low decoding complexity (and more so for the single-resolution).
Denote the observed $z^c$ values in these two resolutions as $\phi_1$ and $\phi_2$ respectively.
The MLE decoding of OddEEC can be done in tens of nanoseconds using a precomputed MLE (value) table for all possible value 
combinations of $\phi_1$ and $\phi_2$, because the table is only a few thousand bytes in size, and hence can easily fit in main memory
or even cache.  For example, in the two-resolution scheme and setting used in our evaluation in Section~\ref{sec:eval_eec}, where each subcode has 48 bits (half of the 96-bit-long codeword), the table contains only $24 \times 24 = 576$ entries each of which is a 
4-byte-long (precomputed) MLE value.  It is $24 \times 24$ (rather than $48 \times 48$) here, because 
$\phi_1$ (or $\phi_2$) values larger than or equal to $24$ implies that accurate BER estimation is not possible from the first (or the second)
resolution due to the ``saturation effect" described in the beginning of Section~\ref{subsec:samp}.

Finally, other computations before the MLE decoding step, such as the computation (by the receiver) of the OddEEC codeword 
from the received packet $P$ and of the corresponding (observed) $\phi_1$ and $\phi_2$ values, can also be done in hundreds of nanoseconds,
using SIMD instructions such as \textit{\_mm\_and\_si128} and \textit{\_\_builtin\_popcount}~\cite{intel_set}.  
As such, the total decoding time is much shorter than the frame transmission time (of 1.25 microseconds) using the fastest WiFi technology as 
described at the end of Section~\ref{subsec:prob_stat}. 

In the following evaluation section, we only compare the running times of the MLE decoding step, because with the aforementioned SIMD instructions, the other computations take a similar amount of time for three EEC schemes (gEEC, mEEC, and OddEEC).

\section{Evaluation of OddEEC}
\label{sec:eval_eec}
 In this section, we first evaluate the estimation accuracy of OddEEC through simulations and compare its performance with gEEC~\cite{geec}, the state-of-the-art method. 
 We do not compare against the original EEC since gEEC outperforms the original EEC as described  in Section~\ref{subsec:relate_eec}.
The results show that OddEEC outperforms gEEC within certain BER ranges. 
OddEEC also has significantly higher decoding (estimation) efficiency, using a small pre-computed MLE table as described in the second last paragraph in Section~\ref{subsec:multi_mle}.

Then, we compare OddEEC with mEEC~\cite{meec}.  For this comparison, we use a different evaluation setting:  
We assume that both the OddEEC and the mEEC codewords are immune from corruption (bit errors).
We use this ``with immunity" setting to our disadvantage (as OddEEC can deal with codeword corruption with ease), because
mEEC cannot deal with codeword corruption as stated in~\cite{meec}.
 Our results demonstrate that OddEEC achieves comparable estimation accuracy while requiring a significantly smaller precomputed table than mEEC.


\begin{figure*}[!htb]
  \begin{center}
    \includegraphics[width=0.9\columnwidth]{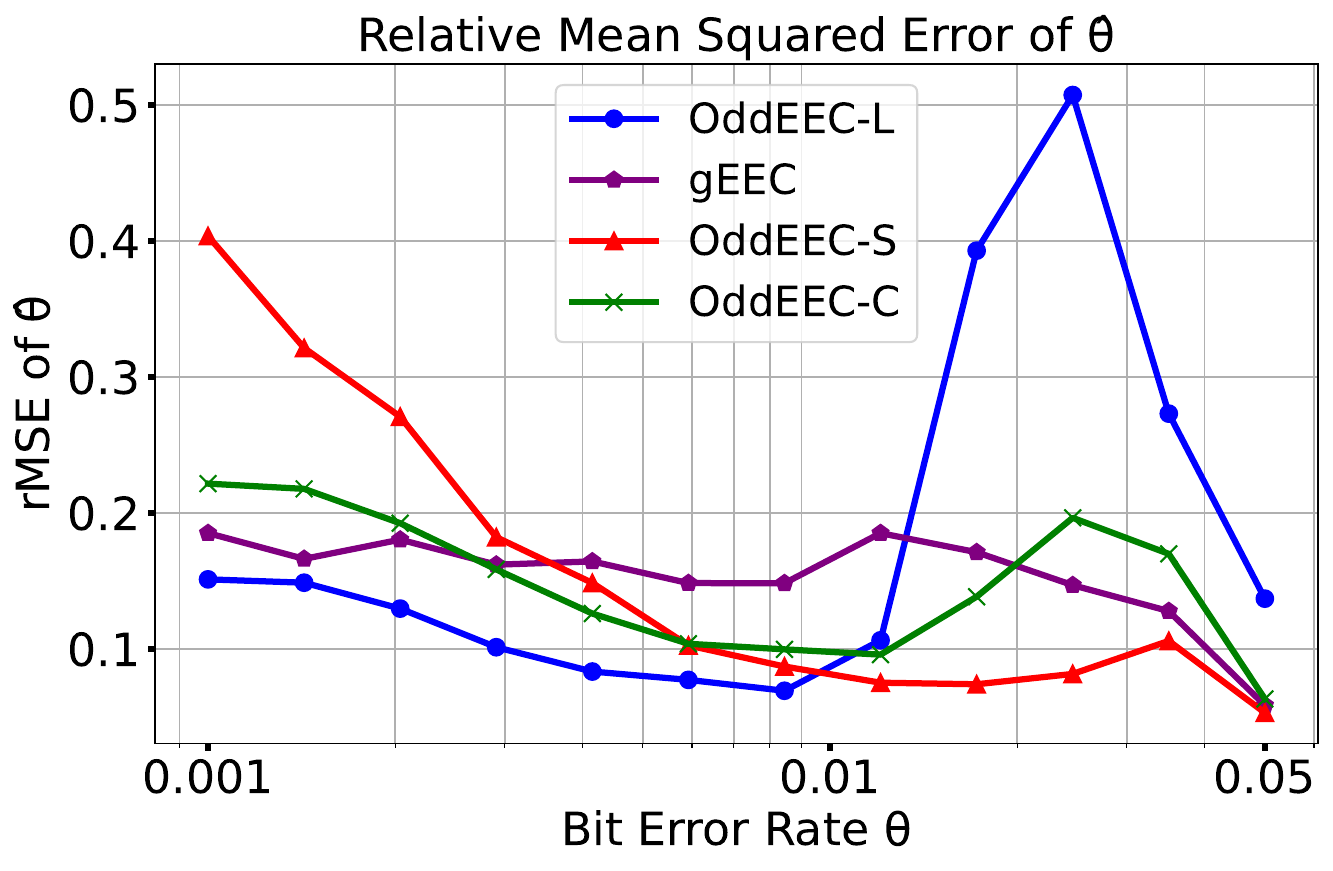}
    \includegraphics[width=0.9\columnwidth]{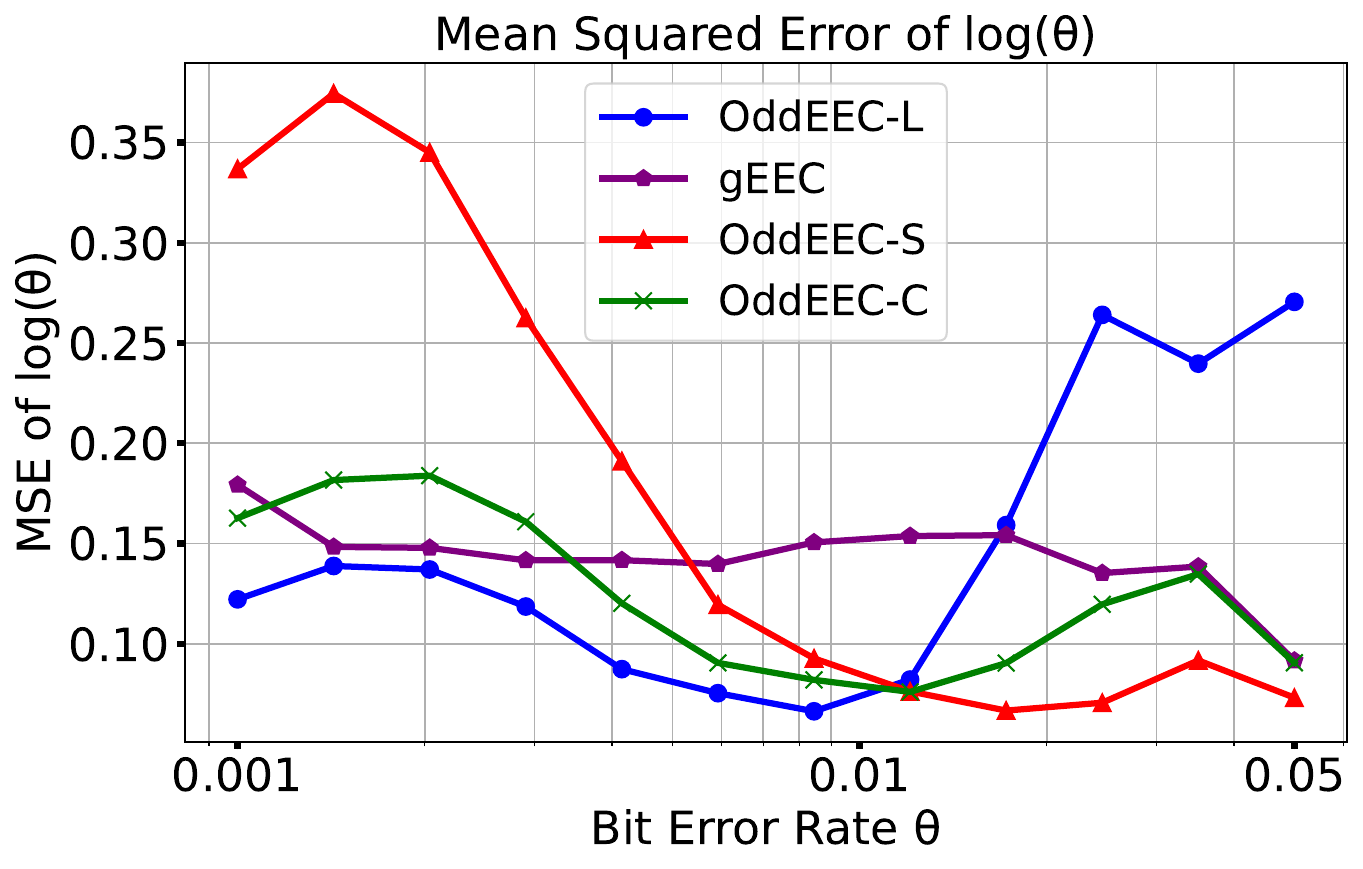}
    \caption{Accuracy comparison of OddEEC and gEEC when packet length $l=12,000$ bits.}\label{fig:eval_12,000}
  \end{center}
\end{figure*}

\begin{figure*}[!htb]
  \begin{center}
    \includegraphics[width=0.9\columnwidth]{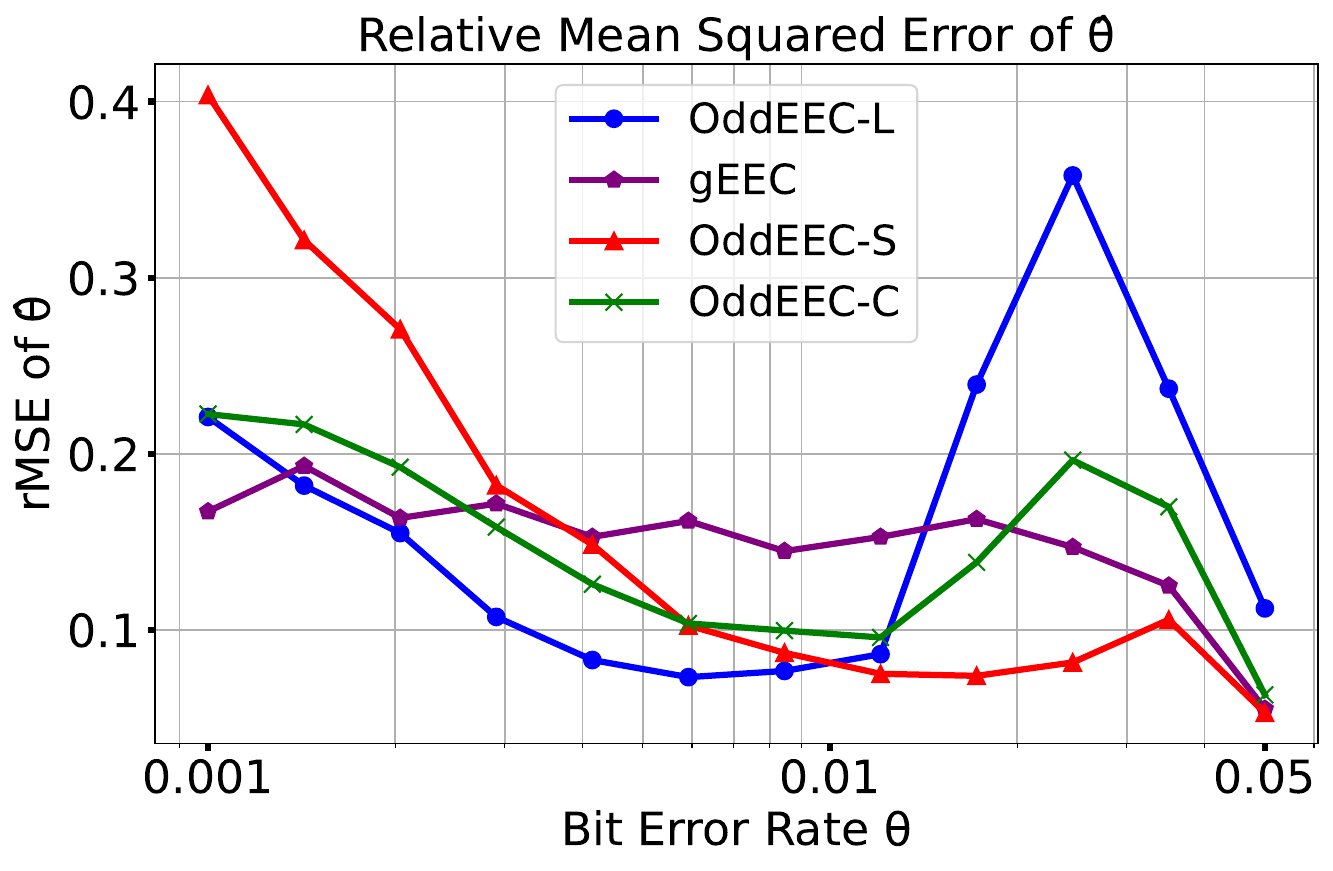}
    \includegraphics[width=0.9\columnwidth]{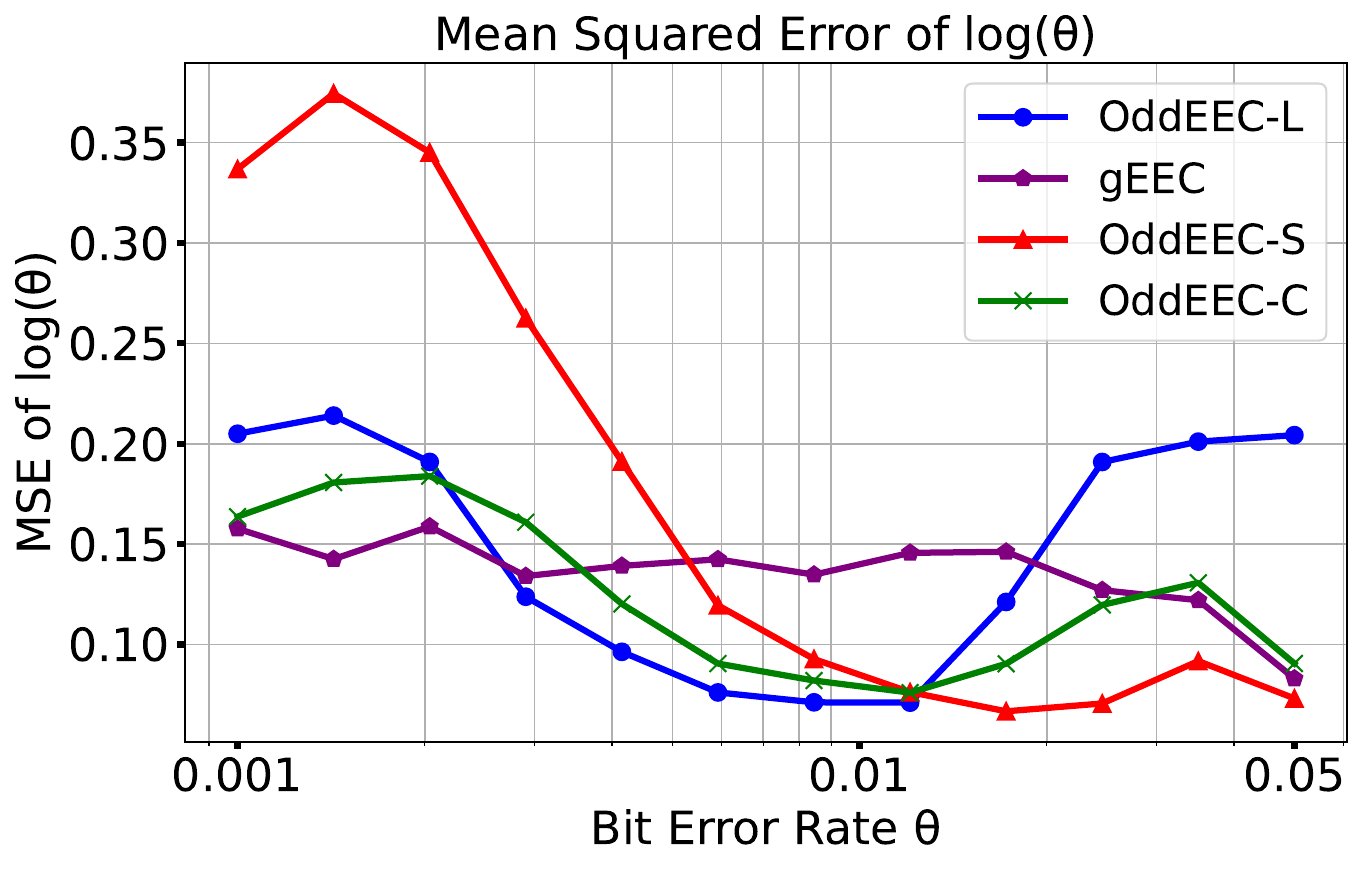}
    \caption{Accuracy comparison of OddEEC and gEEC when packet length $l=4000$ bits.}\label{fig:eval_4000}
  \end{center}
\end{figure*}

\vspace{0.4em}
\subsection{Evaluation Setup}\label{subsec:eva_setup}

In our evaluation, we use two packet lengths of $l=4000$ and $l=12,000$ bits
(which is both the typical and the maximum packet length in wireless network as described in~\cite{meec,geec,12000_usage}). 
Our target $\theta$ (BER) range is $[0.001, 0.05]$, since $0.05$ is considered a very high BER in wireless networking~\cite{low_snr,noma_ber} and hence a BER
higher than $0.05$ does not need to be accurately estimated (In this case it suffices to report that the BER is larger than $0.05$). Furthermore, as described in~\cite{sigcomm_ChenZZY10}, some applications like Wi-Fi rate adaptation only needs to know if the BER is close to or above a certain value and $0.05$ already appears to be a very large value in real-time network environment.  

We evaluate the accuracy of our MLE (for $\theta$) on the fourteen $\theta$ values geometrically even distributed in the range $[0.001,0.05]$.
For each combination of $l$ and
$\theta$, we generate 1000 random packets and introduce random bit errors (with rate $\theta$) to both these packets and their EEC codewords (i.e., without
immunity).  We report the average estimation accuracy (of the estimated BER $\hat{\theta}$) on these packets 
measured by the following two metrics like in
previous EEC papers~\cite{geec,meec}:
\begin{itemize}
  \item The Relative Mean Squared Error (rMSE), which is defined as:
        $\boldsymbol{\mathrm{E}}[{(\frac{\hat{\theta}-\theta}{\theta})}^2]$
  \item The Mean Squared log Error (logMSE), which is defined as:
        $\boldsymbol{\mathrm{E}}[(\log(\theta)-\log(\hat{\theta}))^2]$
\end{itemize}

In the interest of space, we do not present the accuracy comparisons according to two other metrics that were used in the literature~\cite{geec,meec}:
probability of large error; and relative bias.
Our comparisons according to these two metrics lead to similar conclusions.

 In all experiments, we fix the EEC codeword length $n$ to 96 bits for both gEEC and OddEEC. Specifically,
in gEEC, the codeword consists of $16$ sub-sketches of $6$ bits each, following
the same (optimal) configuration in~\cite{geec}. The gEEC codeword is computed
from 768 bits that are randomly sampled from the packet.

For OddEEC, we evaluate the following three configurations:
\begin{itemize}
\item We use a single resolution (that uses the entire 96-bit ``budget") with a sampling length of 
$r=2000$, and denote it as \textit{OddEEC-S} (where ``S" indicates a small sampling length).
\item We again use a single resolution (that uses the entire 96-bit ``budget"), but set the sampling length to 
$r=4500$.  We denote it as \textit{OddEEC-L} (where ``L" indicates a large sampling length).  Note that for a 4000-bit packet, since $r=4500>4000$, the packet is not sampled according to the sampling policy described in Section~\ref{subsec:samp}.
\item We employ a dual-resolution strategy in which each resolution uses 48 bits (so their total is 96), and the corresponding sampling lengths are $2250$ and $1000$
respectively (half of the sampling lengths used in the two one-resolution configurations). Since this configuration combines the sampling scales of OddEEC-S and OddEEC-L, we refer to it as \textit{OddEEC-C} (with `C' denoting ``combination").
\end{itemize}

To ensure fairness and improve estimation stability near high BER values, we upper-bound the MLE output at $0.06$ for all three OddEEC configurations, which is reasonable since it is slightly above $0.05$, the highest possible BER value that we intend to accurately measure.
For a fair comparison, this upper bound (of $0.06$) is also applied to gEEC and mEEC.  As a result of this upper-bounding, all evaluated algorithms exhibit improved accuracy as the BER approaches 
$0.05$.

\begin{figure*}[htb]
  \begin{center}
    \includegraphics[width=0.9\columnwidth]{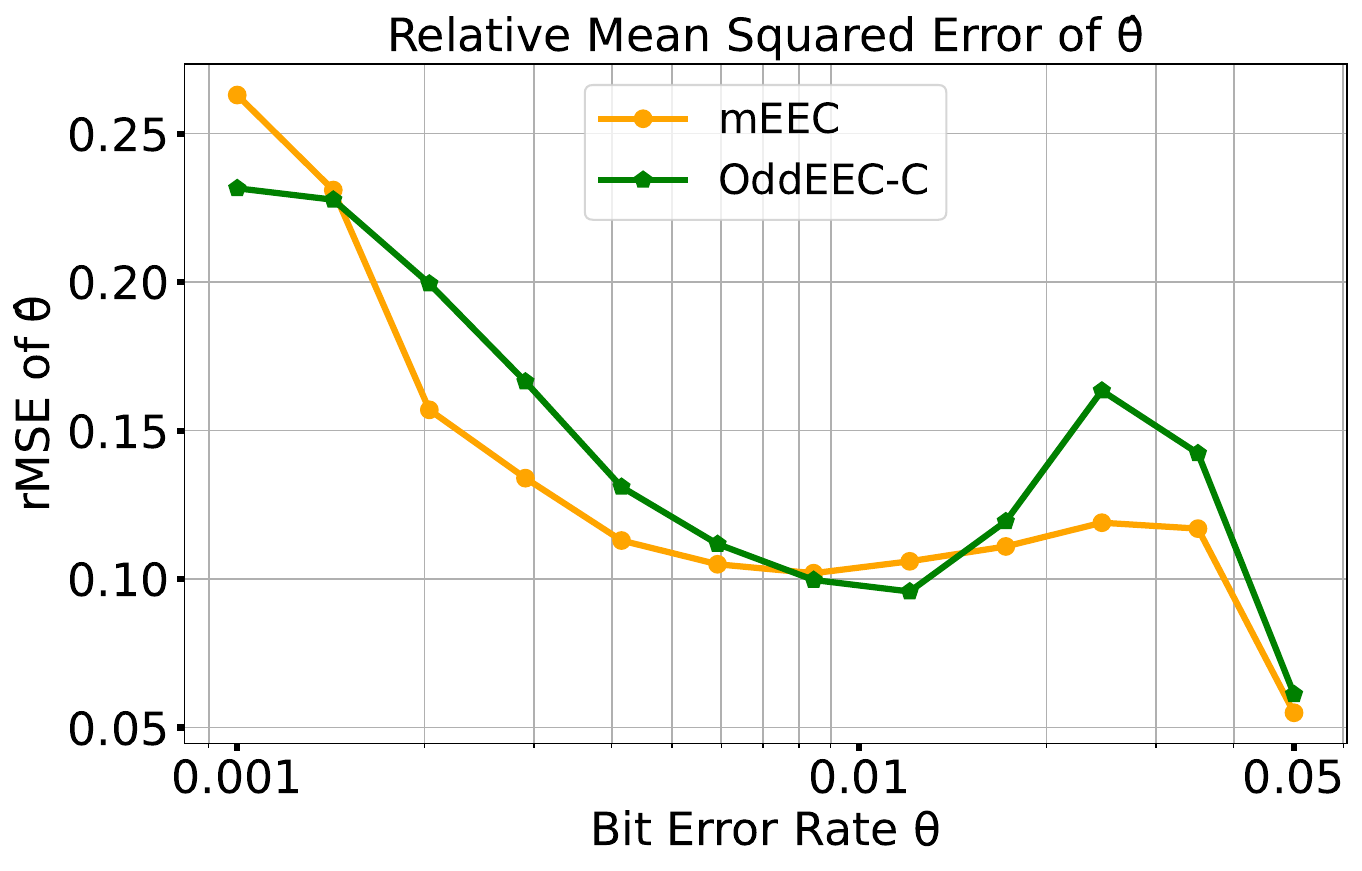}
    \includegraphics[width=0.9\columnwidth]{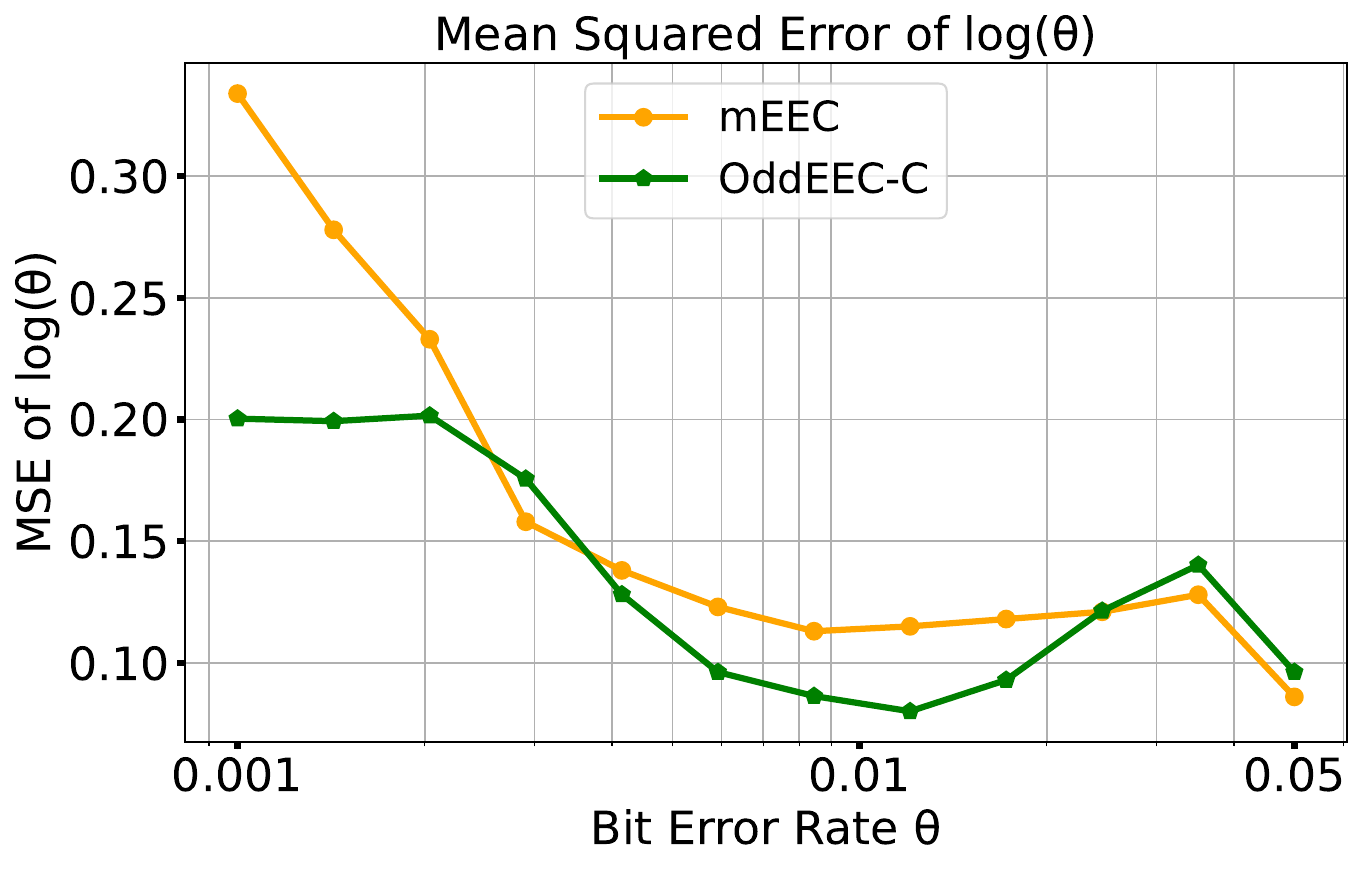}
    \caption{Accuracy comparison of OddEEC with mEEC when packet length $l=12,000$ bits.}\label{fig:eval_meec}
  \end{center}
\end{figure*}

\vspace{0.4em}
\subsection{Comparison with gEEC.}\label{subsec:comp_geec}
\noindent \textbf{Accuracy Comparison.}  Figures~\ref{fig:eval_12,000} and~\ref{fig:eval_4000} compare the four benchmark algorithms, on the rMSE (left subfigure) and the logMSE (right subfigure) metrics, 
for the 12,000-bit and the 4000-bit cases, respectively.  For each of the four algorithms except OddEEC-L, the corresponding curves are nearly identical in
(each subfigure of) Figure~\ref{fig:eval_12,000} and in (the corresponding subfigure of) Figure~\ref{fig:eval_4000}.
This is because all algorithms share the same parameter settings, except for OddEEC-L, which requires a different sampling length for 4000-bit and 12,000-bit packets, as explained at the end of Section~\ref{subsec:eva_setup}.  Nevertheless, the OddEEC-L curves exhibit similar shapes and trends, and hence lead to the same conclusion, across Figures~\ref{fig:eval_12,000} and~\ref{fig:eval_4000}.  Therefore, we focus our analysis on the results shown in Figures~\ref{fig:eval_12,000} (for 12,000-bit packet)
in the following.




As shown in Figure~\ref{fig:eval_12,000}, OddEEC-S achieves significantly higher accuracy than gEEC when the bit error rate (BER) is in the range 
$[0.004, 0.05]$:  OddEEC-S achieves a smaller rMSE (by factors between 1.11 and 2.46) and a smaller logMSE (by factors between 1.17 and 2.31). However, under low BER values ($\theta < 0.004$), OddEEC-S exhibits significantly worse accuracy than gEEC. This accuracy degradation is likely because the only input to OddEEC's estimator (Eq.~\ref{eq:mle}) is a single integer number $z^c$
(the number of discrepancies between two OddEEC codewords).  As the BER decreases,
the value of $z^c$ also becomes smaller, so the ``rounding error'' between the
observed $z^c$ value (always an integer) and its mean (which is in general a real number) 
becomes relatively (to $z^c$) larger.  In comparison, gEEC suffers much less from this ``rounding error'' issue,
since the input to its estimator consists of 16 integer values (readings from the 16 ToW sub-sketches). 

Increasing the sampling length mitigates this rounding error by increasing the expected Hamming distance between $P$ and $P'$ (after sampling), and hence 
the expected value of $z^c$.  As shown in Figure~\ref{fig:eval_12,000}, OddEEC-L achieves significantly higher accuracy than gEEC over the BER range $[0.001, 0.012]$, with rMSE being smaller by factors between 1.12 and 2.14 and logMSE being smaller by factors between 1.07 and 2.27.  However, at high BER values ($\theta > 0.017$), OddEEC-L’s accuracy starts to become worse than gEEC. This is because, when BER is larger than $0.017$, the expected Hamming distance $\mu$ is larger than $r*0.017 = 4500*0.017 = 77$, which exceeds aforementioned (at the beginning of Section~\ref{subsec:samp}) ``saturation threshold" $64 = 2/3*96$. 


By combining the power of two resolutions, OddEEC-C demonstrates comparable accuracy as gEEC across the evaluated BER range, as shown in Figure~\ref{fig:eval_12,000}. Specifically, it achieves a smaller rMSE by factors between 1.02 and 1.93 and a smaller logMSE by factors between 1.18 and 2.02 in the BER range $[0.004,0.02]$.  In the remaining BER range, OddEEC-C exhibits slightly worse estimation accuracy than gEEC, with rMSE being larger by factors between 1.07 and 1.33 and with logMSE being similar or slightly larger. The accuracy of OddEEC-C looks better in logMSE than in rMSE due to the 
following sensitivity profile of OddEEC.  When BER is large, OddEEC-C may still suffer from the saturation effect, occasionally outputting poor estimates (e.g., outputting the MLE upper bound of 0.06 even when the actual BER is much smaller). However, logMSE is less sensitive to such large deviations between the estimated and the true values than rMSE, making it a more favorable metric for OddEEC-C in high BER regions.

\noindent \textbf{Comparison of Decoding Efficiency.} 
Compared with gEEC, all three configurations of OddEEC offer significantly better decoding (estimation) efficiency, for the following reason. 
In gEEC, the MLE decoding process cannot be precomputed (as in OddEEC), because doing so would require precomputing and storing $2^{96}$ entries.
Hence, in gEEC, each MLE computation has to be performed at runtime for each received packet, which approximately takes tens of milliseconds.
This is several orders of magnitude longer than the decoding time (in tens of nanoseconds using a small lookup table, as explained at the end of Section~\ref{subsec:multi_mle}) needed by OddEEC. 
 
 To summarize, single-resolution OddEEC achieves much better accuracy than gEEC in some BER ranges while performing much worse in some other ranges. 
Two-resolution OddEEC, by combining the strengths of both resolutions, achieves slightly better accuracy than gEEC in some BER ranges while performing slightly worse in some other ranges.
OddEEC also has a significantly lower decoding (estimation) time, by approximately six orders of magnitude, than gEEC.

\subsection{Comparison with mEEC}
In this subsection, we compare OddEEC with mEEC.  In the interest of space, we present results only for packet length $l = 12,000$, as the comparison on 4000-bit packets leads to similar conclusions.  Additionally, we report only the results for the (two-resolution) OddEEC-C configuration, since it has been shown in the previous subsection that it overall outperforms both OddEEC-S and OddEEC-L across the entire evaluated BER range. The evaluation setup is identical to that used in the comparison with gEEC, except for the following two differences. First, as noted at the beginning of Section~\ref{sec:eval_eec}, mEEC is designed under the immunity assumption. To ensure a fair comparison, we modify OddEEC to operate under the same assumption by removing the second probability term (for handling the bit errors in the 
codeword) in Eq.~\ref{eq:mle}. Second, mEEC provides parameter settings only when the total code length is 80. Accordingly, we configure OddEEC-C to also use a total codeword length of 80 bits for consistency. Under this configuration, each subcode has a length of 40 bits, and the corresponding sampling lengths are set to 1900 and 840 bits, respectively.

As shown in Figure~\ref{fig:eval_meec}, OddEEC-C achieves overall comparable estimation accuracy as mEEC for 12,000-bit packets. Specifically,  OddEEC-C has a slightly higher rMSE than mEEC. 
However, in the logMSE metric, OddEEC-C achieves slightly better accuracy in the entire BER range except at BER values of 0.003 and 0.037. This is because logMSE is a more favorable metric for OddEEC-C, as explained in the third paragraph of Section~\ref{subsec:comp_geec}.  In terms of decoding complexity, both OddEEC and mEEC support precomputation, resulting in similar runtime efficiency. 
However, OddEEC has a significant advantage in the size of the precomputed lookup table.  Under the evaluated parameter settings, the precomputed table for OddEEC occupies only $4 \times 20 \times 20 = 1600$ bytes (similar to the reasoning at the end of Section~\ref{subsec:multi_mle}), whereas that for mEEC is approximately 20 megabytes~\cite{meec} in size, which is more than 12,000 times larger.  Moreover, mEEC relies on the immunity assumption, which is unrealistic in practical wireless environments. In contrast, OddEEC is designed to operate without this assumption, making it more applicable and robust for real-world deployment.

\section{Conclusions}
\label{sec:conclusions}
In this paper, we introduce OddEEC, a novel Error Estimating Code (EEC) that
achieves better coding efficiency and significantly lower decoding complexity 
than the state of the art. We
develop a bit sampling technique and a maximum likelihood estimator that
address the new challenges of adapting Odd Sketch to an EEC scheme. We derive extensive analysis on the estimation accuracy of
OddEEC,  and demonstrate that
OddEEC achieves significantly lower decoding complexity than competing schemes while simultaneously achieving better overall estimation accuracy.

\noindent\textbf{Acknowledgment.} This work was supported in part by the National Science Foundation under Grant No. CNS-2007006 and by a seed gift from Dolby Laboratories.

\bibliographystyle{IEEEtran}
\bibliography{bib/odd_sketch,bib/set_difference_from_pbs}

\end{document}